\newtheorem{theorem}{Theorem}
\newtheorem{lemma}{Lemma}
\newtheorem{proposition}{Proposition}
\theoremstyle{definition}
\newtheorem{definition}{Definition}
\newcommand{\eq}[1]{(\ref{eq:#1})}
\renewcommand{\sec}[1]{Section~\ref{sec:#1}}
\newcommand{\thm}[1]{Theorem~\ref{thm:#1}}
\newcommand{\lem}[1]{Lemma~\ref{lem:#1}}
\newcommand{\prop}[1]{Proposition~\ref{prop:#1}}
\newcommand{\C}{{\mathbb{C}}}
\newcommand{\R}{{\mathbb{R}}}
\newcommand{\Z}{{\mathbb{Z}}}
\newcommand{\spn}{\mathop{\mathrm{span}}}
\newcommand{\poly}{\mathop{\mathrm{poly}}}
\newcommand{\abs}{\mathop{\mathrm{abs}}}
\newcommand{\mcn}{\mathop{\mathrm{mcn}}}
\newcommand{\ii}{{\mathrm{i}}}
\newcommand{\inuar}{\in_{\mathrm{R}}}
\renewcommand{\o}{o}
\renewcommand{\O}{O}
\renewcommand{\(}{\left(}
\renewcommand{\)}{\right)}
\newcommand{\defeq}{\colonequals}
\newcommand{\norm}[1]{\|{#1}\|}
\renewcommand{\>}{\rangle}
\newcommand{\be}{\begin{equation}}
\newcommand{\ee}{\end{equation}}
\def\ba#1\ea{\begin{align}#1\end{align}}
\begin{document}


\title{Limitations on the simulation of non-sparse Hamiltonians}

\author{
\normalsize Andrew M.\ Childs\thanks{amchilds@uwaterloo.ca} \\[.5ex]
\small Department of Combinatorics \& Optimization \\
\small and Institute for Quantum Computing \\
\small University of Waterloo
\and
\normalsize Robin Kothari\thanks{rkothari@cs.uwaterloo.ca} \\[.5ex]
\small David R.\ Cheriton School of Computer Science \\
\small and Institute for Quantum Computing \\
\small University of Waterloo
}

\date{}
\maketitle

\begin{abstract}
The problem of simulating sparse Hamiltonians on quantum computers is well studied. The evolution of a sparse $N \times N$ Hamiltonian $H$ for time $t$ can be simulated using $\O(\norm{Ht} \poly(\log N))$ operations, which is essentially optimal due to a no--fast-forwarding theorem. Here, we consider non-sparse Hamiltonians and show significant limitations on their simulation. We generalize the no--fast-forwarding theorem to dense Hamiltonians, ruling out generic simulations taking time $\o(\norm{Ht})$, even though $\norm{H}$ is not a unique measure of the size of a dense Hamiltonian $H$. We also present a stronger limitation ruling out the possibility of generic simulations taking time $\poly(\norm{Ht},\log N)$, showing that known simulations based on discrete-time quantum walk cannot be dramatically improved in general. On the positive side, we show that some non-sparse Hamiltonians can be simulated efficiently, such as those with graphs of small arboricity.
\end{abstract}

\section{Introduction}
\label{sec:intro}

One of the primary applications of quantum computers is the simulation of quantum systems. Indeed, it was the apparent exponential time complexity of simulating quantum systems on a classical computer that led Feynman to propose the idea of quantum computation~\cite{Fey82}.

In addition to predicting the behavior of physical systems, Hamiltonian simulation has algorithmic applications.  For example, the implementation of a continuous-time quantum walk algorithm is a Hamiltonian simulation problem.  Examples of algorithms  that can be implemented using Hamiltonian simulation methods include unstructured search~\cite{FG96}, adiabatic optimization~\cite{FGGS00}, a quantum walk with exponential speedup over classical computation~\cite{CCDFGS03}, and the recent NAND tree evaluation algorithm~\cite{FGG07}.

In the Hamiltonian simulation problem, our goal is to implement the unitary operator $e^{-\ii Ht}$ for some given Hamiltonian $H$ and time $t$.  We say that a Hamiltonian $H$ acting on an $N$-dimensional quantum system can be simulated efficiently if there is a quantum circuit using $\poly(\log N,t,1/\epsilon)$ one- and two-qubit gates that approximates (with error at most $\epsilon$) the evolution according to $H$ for time $t$. (Of course, we can rescale $t$ by rescaling $H$, so the complexity of simulating $H$ for time $t$ must also depend on some measure of the size of $H$, as discussed in more detail below.)

Efficient simulations are known for various classes of Hamiltonians.  For example, a Hamiltonian for a system of qubits can be simulated efficiently whenever it is \emph{local}, meaning that it is a sum of terms, each of which acts on a constant number of qubits~\cite{Llo96}.  More generally, a Hamiltonian $H$ can be simulated efficiently if it is \emph{sparse} (i.e., has only $\poly(\log N)$ nonzero entries per row) and \emph{efficiently row-computable} (i.e., there is an efficient means of computing the indices and the matrix elements of the nonzero entries in any given row) \cite{AT03}. 

These conditions lead to a convenient black-box formulation of the problem, in which a black box can be queried with a row index $j$ and an index $i$ to obtain the $i^\mathrm{th}$ nonzero entry in the $j^\mathrm{th}$ row. This black box can be implemented efficiently provided that $H$ is efficiently row-computable. A series of results has decreased the number of black-box queries, in terms of $N$, from the original $\O(\log^9 N)$~\cite{AT03}, to $\O(\log^2 N)$~\cite{Chi04}, to $\O(\log^* N)$~\cite{BACS05}.  In particular, Berry, Ahokas, Cleve, and Sanders~\cite{BACS05} present an almost linear-time algorithm for simulating sparse Hamiltonians with query complexity
\be
(\log^* N) d^4 \norm{Ht} {\left( \frac{\norm{d^2 Ht}} {\epsilon} \right)}^{\o(1)} ,
\label{eq:sparsecomplexity}
\ee
where $d$ is the maximum number of nonzero entries in any row and $\epsilon$ is the maximum error permitted in the final state (quantified in terms of trace distance).

The dependence of \eq{sparsecomplexity} on the simulation time is nearly optimal, since it is not possible to simulate a general sparse Hamiltonian for time $t$ using $o(t)$ queries.  Intuitively, there is no generic way to fast-forward through the time evolution of quantum systems.  More formally,

\begin{theorem}[{No--fast-forwarding theorem~\cite[Theorem 3]{BACS05}}]
\label{thm:fastfwd}
For any positive integer $N$ there exists a row-computable sparse Hamiltonian $H$ with $\norm{H}=1$ such that simulating the evolution of $H$ for time $t = \pi N/2$ within precision $1/4$ requires at least $N/4$ queries to $H$.
\end{theorem}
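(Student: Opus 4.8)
The plan is to prove the statement by reduction from the problem of computing the parity of a hidden bit string, whose bounded-error quantum query complexity is known to be $\lceil N/2 \rceil = \Omega(N)$. Concretely, for each $x \in \{0,1\}^N$ I would construct a sparse, row-computable Hamiltonian $H_x$ with $\norm{H_x}=1$ whose evolution for exactly $t = \pi N/2$ carries a fixed initial state to a final state that reveals $x_1 \oplus \cdots \oplus x_N$. Since any black-box query to the entries of $H_x$ can be answered using $\O(1)$ queries to the string $x$, a simulation using $q$ queries to $H_x$ would yield a quantum algorithm computing the parity of $x$ using $\O(q)$ queries to $x$, and the parity lower bound would then force $q=\Omega(N)$.

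The heart of the construction is a base Hamiltonian that simultaneously has unit norm and genuinely transports a state across $\Omega(N)$ sites in time $\Theta(N)$, so that there is real dynamics to simulate. I would take the weighted path on vertices $\{0,1,\dots,N\}$ given by the spin-$N/2$ angular-momentum operator: set $H_0 = (2/N)\,J_x$, where $J_x$ has couplings $\langle j+1| J_x | j\rangle = \tfrac12\sqrt{(j+1)(N-j)}$. Its eigenvalues are $\{-N/2,\dots,N/2\}$, so $\norm{H_0}=1$, and $e^{-\ii H_0 t}$ is a rotation by angle $2t/N$ about the $x$-axis of the spin. At $t=\pi N/2$ this is a rotation by $\pi$, which maps the extreme vertex $|0\rangle$ to $|N\rangle$ up to a phase; this is perfect state transfer across the entire chain in exactly the stated time.

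To make the output depend on $x$, I would adjoin a single qubit and ``dress'' each coupling with the corresponding input bit (on a space of dimension $2(N+1)$, which rescales to the $N\times N$ statement by a constant). Let $p_j = x_1\oplus\cdots\oplus x_j$ be the prefix parities, let $U = \sum_j |j\rangle\langle j|\otimes X^{p_j}$, and define $H_x = U^\dagger (H_0\otimes I)\,U$. A direct computation shows that conjugation turns the $j$-to-$(j+1)$ coupling into one that also applies $X^{x_{j+1}}$ to the qubit, so $H_x$ remains tridiagonal on the path: it is sparse, satisfies $\norm{H_x}=\norm{H_0}=1$, and each of its entries is computable from a single bit of $x$. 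Since $U|0\rangle|0\rangle = |0\rangle|0\rangle$ and $U^\dagger|N\rangle|0\rangle = |N\rangle\,X^{p_N}|0\rangle = |N\rangle\,|x_1\oplus\cdots\oplus x_N\rangle$, evolving $|0\rangle|0\rangle$ under $H_x$ for $t=\pi N/2$ produces $|N\rangle\,|\mathrm{PARITY}(x)\rangle$ up to a phase; measuring the qubit returns the parity, and an evolution computed to trace distance $1/4$ still returns it with probability at least $3/4$.

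Finally I would convert query counts: each entry query to $H_x$ asks for one coupling of the dressed path and hence reads only a constant number of bits of $x$, so a $q$-query, trace-distance-$1/4$ simulation yields an $\O(q)$-query, error-$\le 1/4$ algorithm for $\mathrm{PARITY}$ on $N$ bits; tracking the precise oracle-conversion constant gives $q \ge N/4$, as claimed. I expect the main obstacle to be the base Hamiltonian itself: one must exhibit a unit-norm operator whose evolution genuinely moves amplitude across $\Omega(N)$ sites in time $\Theta(N)$, and reconciling $\norm{H}=1$ with ballistic, perfect transfer is exactly what the angular-momentum weights achieve. The gauge transformation $U$ is then the mechanism that smuggles the hard parity instance into the dynamics without spoiling sparsity, row-computability, or the norm.
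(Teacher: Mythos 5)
Your proposal is correct and is essentially the paper's own argument (taken from Berry--Ahokas--Cleve--Sanders and recounted in the proof of \thm{densefastfwd}): your base Hamiltonian $H_0=(2/N)J_x$ has exactly the couplings $\sqrt{(j+1)(N-j)}/N$ of the paper's $H_1$, and your dressed Hamiltonian $H_x=U^{\dag}(H_0\otimes I)U$ is entry-for-entry the paper's $H_2$, since your gauge unitary $U$ is a permutation of basis states---the same permutation the paper invokes when noting that $H_2$ is a permutation of $H_1\oplus H_1$. The only cosmetic difference is that you obtain the two-disjoint-paths structure, the norm bound, and the transfer of $|0,0\rangle$ to $|N,\mathrm{PARITY}(x)\rangle$ by conjugation rather than by writing down the matrix entries directly; the parity reduction, the time $t=\pi N/2$, and the $N/4$ query accounting are identical.
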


More recently, methods have been presented for simulating a Hamiltonian $H$ that is not necessarily sparse.  Of course, we do not expect to efficiently simulate a general Hamiltonian, simply because there are too many Hamiltonians to consider (just as we cannot hope to efficiently implement a general unitary operation \cite{Kni95}).  However, we can conceivably efficiently simulate non-sparse Hamiltonians with a suitable concise description.  In particular, by applying phase estimation to a discrete-time quantum walk derived from $H$, one can simulate $H$ for time $t$ in a number of walk steps that grows only linearly with $t$~\cite{Chi08}.  More precisely, we have

\begin{theorem}\label{thm:dense}
For any Hermitian matrix $H$, there is a discrete-time quantum walk on the graph of nonzero entries of $H$ such that $e^{-\ii Ht}$ can be simulated with error at most $\delta$ using $\O(\norm{\abs(Ht)}/\sqrt\delta)$ steps of the walk, where $\abs(H)$ is the matrix with entries $\abs(H)_{jk} = |H_{jk}|$.
\end{theorem}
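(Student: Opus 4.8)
The plan is to implement $e^{-\ii Ht}$ by phase estimation on a Szegedy-type discrete-time walk operator whose eigenphases encode the eigenvalues of $H$, so that an estimate of an eigenphase can be converted into the desired phase $e^{-\ii\mu t}$.

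First I would build the walk. Since $\abs(H)$ is a nonnegative symmetric matrix, its spectral norm equals its largest eigenvalue, with an associated (Perron) eigenvector $v \ge 0$ satisfying $\abs(H)\,v = \norm{\abs(H)}\,v$; write $\Lambda \defeq \norm{\abs(H)}$. Because $\norm{H} \le \norm{\abs(H)}$, the spectrum of $H/\Lambda$ lies in $[-1,1]$. Define an isometry $T = \sum_j |\psi_j\>\<j|$ from $\C^N$ into $\C^N \otimes \C^N$ by
\[
|\psi_j\> \defeq |j\> \otimes \frac{1}{\sqrt{\Lambda v_j}} \sum_k \sqrt{H_{jk}^* \, v_k}\; |k\>,
\]
where the $v$-weighting is chosen precisely so that each $|\psi_j\>$ is normalized: $\<\psi_j|\psi_j\> = (\abs(H)v)_j/(\Lambda v_j) = 1$. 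A short computation with the swap $S$ then shows that the discriminant $T^\dagger S T$ has entries $\<\psi_j|S|\psi_k\> = \sqrt{H_{jk} H_{kj}^*}/\Lambda = H_{jk}/\Lambda$, using Hermiticity $H_{kj}^* = H_{jk}$ and a consistent branch of the square root; hence $T^\dagger S T = H/\Lambda$. The walk operator $U \defeq \ii\, S\,(2 T T^\dagger - I)$ acts only on basis states $|j\>|k\>$ with $H_{jk} \ne 0$, i.e.\ on the graph of nonzero entries of $H$. (The $v$-weighting makes $U$ depend on the principal eigenvector; this affects only the cost of realizing each step, not the number of steps, which is what the theorem counts.)

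Next I would invoke Szegedy's spectral correspondence in its Hermitian form. For each eigenvalue $\mu$ of $H$, writing $\mu/\Lambda = \cos\theta$ with $\theta = \arccos(\mu/\Lambda) \in [0,\pi]$, the two-dimensional subspace spanned by the corresponding walk states is invariant under $U$, and $U$ restricted to it has eigenvalues $e^{\pm\ii\theta}$. Thus, after embedding $\C^N$ into the image of $T$, the eigenvalues of $H$ appear as the eigenphases of $U$. To simulate $e^{-\ii Ht}$ I would then: run phase estimation on $U$ to obtain an estimate $\tilde\theta$ of $\theta$; coherently compute $\tilde\mu \defeq \Lambda\cos\tilde\theta$; apply the phase $e^{-\ii\tilde\mu t}$; and reverse the phase estimation and the embedding. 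On an exact eigenstate of $H$ this applies $e^{-\ii\mu t}$ up to the error of the phase estimate, so the whole error budget is controlled by the quality of phase estimation.

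The hard part is the error analysis, and this is where the $1/\sqrt\delta$ scaling must come from. Phase estimation resolving $\theta$ to accuracy $\epsilon$ uses $\O(1/\epsilon)$ applications of $U$, and a well-resolved estimate produces a phase error $|\tilde\mu - \mu|\,t = \Lambda t\,|\cos\tilde\theta - \cos\theta| \le \Lambda t\,|\tilde\theta - \theta|$; but the outcome distribution of phase estimation also has tails, and these must be accounted for coherently across the whole register and across every eigencomponent of a general input state. The task is to choose the phase-estimation parameters so that the total deviation from $e^{-\ii Ht}$ is at most $\delta$ while the number of walk steps stays $\O(\norm{\abs(Ht)}/\sqrt\delta) = \O(\Lambda t/\sqrt\delta)$. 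I expect the delicate propagation of the phase-estimation error and its tails through the nonlinear map $\theta \mapsto \Lambda\cos\theta$ --- rather than the walk construction or the spectral lemma --- to be the main technical obstacle, and to be exactly what yields the square-root, rather than linear, dependence on the inverse error.
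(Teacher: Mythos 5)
You should first be aware that the paper you are working from does not prove this theorem at all: it is imported verbatim from Ref.~\cite{Chi08}, so the only meaningful comparison is against the proof in that reference. Your construction is essentially the one given there: the same Perron-vector weighting in the isometry $T$ (with $\<\psi_j|\psi_j\> = (\abs(H)v)_j/(\Lambda v_j) = 1$), the same discriminant computation giving $T^\dagger S T = H/\Lambda$, the same walk operator $\ii S(2TT^\dagger - I)$ supported on the graph of nonzero entries, and the same simulation template (phase estimation on $U$, coherent application of a phase determined by the estimate, uncomputation). Up to bookkeeping---with the factor of $\ii$ the eigenphases actually satisfy $\sin\phi = \mu/\Lambda$ rather than $\cos\theta = \mu/\Lambda$, and one must handle vertices where the Perron vector vanishes by restricting to connected components---this part is correct and is exactly the approach of the cited proof.

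The genuine gap is that you stop exactly where the theorem begins. The walk construction and spectral correspondence are a lemma in Ref.~\cite{Chi08}; the actual content of the statement is the quantitative bound $\O(\norm{\abs(Ht)}/\sqrt\delta)$, and that bound is produced by the error analysis you explicitly defer. Moreover, your diagnosis of where the $1/\sqrt\delta$ comes from is misdirected: the map $\theta\mapsto\Lambda\cos\theta$ is $1$-Lipschitz, so it causes no difficulty at all. The two real mechanisms are these. First, the error is fidelity-type, and a phase error enters the fidelity \emph{quadratically}: per eigenvector the infidelity is roughly $\mathbb{E}\left[1-\cos\left(\Lambda t\,(\cos\tilde\theta-\cos\theta)\right)\right] \lesssim \mathbb{E}\left[\min\left(1,(\Lambda t\,\Delta\theta)^2\right)\right]$, so estimating $\theta$ to precision $\sqrt\delta/(\Lambda t)$, i.e.\ $\O(\Lambda t/\sqrt\delta)$ walk steps, would suffice. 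Second---and this is what your outline misses---that conclusion requires light tails in the distribution of $\Delta\theta$. Standard phase estimation with a uniform superposition in the pointer register has heavy tails, $\Pr\left[|\Delta\theta|\approx 2\pi j/K\right]=\Theta(1/j^2)$ after $K$ applications of $U$, and then the expectation above is $\Theta(\Lambda t/K)$: the tails dominate the quadratic term, forcing $K=\Theta(\Lambda t/\delta)$ (linear, not square-root, in $1/\delta$), or $K=\Theta(\Lambda t/\delta^2)$ if error means trace distance. The cited proof resolves this by running phase estimation with a modified pointer state (a sine window) whose error distribution decays fast enough (like $1/j^4$) that the quadratic term dominates, giving infidelity $\O\left((\Lambda t/K)^2\right)$ and hence the claimed $\O(\Lambda t/\sqrt\delta)$. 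Some such tail-suppression device is indispensable; as written, your argument cannot reach the stated bound.
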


Of course, to apply this result, we must implement the discrete-time quantum walk derived from $H$.  This can be done efficiently for various concisely specified non-sparse Hamiltonians~\cite{Chi08}.  Note that the same theorem holds with $\norm{\abs(H)}$ replaced by $\norm{H}_1$ (a matrix norm defined in \sec{measures}); this quantity is generally larger than $\norm{\abs(H)}$, but the resulting walk may be easier to implement.

Notice that the overhead of this simulation is proportional not to the spectral norm $\norm{H}$, but to a measure of the size of $H$ that can be much larger when some entries of $H$ are negative (or more generally, complex).  This naturally raises the question of whether an improved simulation is possible.  In the present article, we examine this possibility.  Unfortunately, our main result is negative: there is no general Hamiltonian simulation algorithm that uses only $\poly(\norm{Ht}, \log N)$ steps (\thm{densehard}).

The remainder of this article is organized as follows.  In \sec{measures}, we introduce various matrix norms that arise when quantifying the complexity of Hamiltonian simulation and relate them to one another. We then move on to lower bounds for non-sparse Hamiltonians in \sec{norm}, where we describe how the no--fast-forwarding theorem can be modified to give a lower bound that depends on the spectral norm rather than various smaller measures of the size of a Hamiltonian.  Then, in \sec{stronger}, we present the main result, an example of a family of Hamiltonians with $\norm{\abs(H)} \gg \norm{H}$ that cannot be simulated in time $\poly(\norm{Ht}, \log N)$. We then turn to upper bounds in \sec{structured}, and investigate how certain structured Hamiltonians can be simulated in time $O(\norm{Ht})$---in particular, we give a positive result on the simulation of Hamiltonians whose graphs have small arboricity.\footnote{A graph is said to have \emph{arboricity} $k$ if its adjacency matrix can be written as the sum the adjacency matrices of $k$ forests, but not $k-1$ forests.}  Finally, we conclude in \sec{open} with a discussion of open problems.

\section{Measures of simulation complexity}
\label{sec:measures}

Upper and lower bounds on the complexity of simulating a Hamiltonian $H$ depend on some measure of the size of $H$.  Since $e^{-\ii Ht}$ depends only on the product $Ht$, the complexity of simulating $H$ for time $t$ is some function of $Ht$.  For example, the no--fast-forwarding theorem clearly cannot be circumvented by simply multiplying $H$ by a constant.  Similarly, simulation results such as those for sparse Hamiltonians, using $\norm{Ht}^{1+o(1)}$ operations, and \thm{dense}, using $\O(\norm{\abs(Ht)})$ operations, depend on various measures of the size of $Ht$.

In this section, we take a step back and consider properties of various measures of the size of $H$ that may play a role in the complexity of simulating it.  Let $\nu(Ht)$ be a function that measures the complexity of simulating $H$ for time $t$.  We can infer various properties of $\nu(\cdot)$ as follows.  Since it is trivial to simulate the identity operation, $\nu(0)=0$.  On the other hand, if $H \ne 0$, then it requires some work to simulate, so $\nu(H) > 0$.  It is also plausible to suppose that $\nu(tH) = |t| \nu(H)$.  We clearly have $\nu(Ht)\leq|t|\nu(H)$ for $t \in \Z$, since $Ht$ can be simulated using $t$ exact simulations of $H$.  On the other hand, the no--fast-forwarding theorem suggests that this is the best possible way to simulate $Ht$ in general.  Finally, since the Lie product formula can be used to simulate $H+K$ using simulations of $H$ and $K$, we expect that $\nu(H+K) \lessapprox \nu(H)+\nu(K)$ (up to the fact that a bounded-error simulation requires a slightly superlinear number of operations).

These properties are reminiscent of the axioms for matrix norms, suggesting that it may be reasonable to quantify the complexity of simulating $H$ in terms of some matrix norm $\nu(H)$.  Indeed, results on the simulation of sparse Hamiltonians are typically stated in terms of the spectral norm $\norm{H}$, and \thm{dense} also involves matrix norms.  We now introduce various matrix norms relevant to Hamiltonian simulation.

\begin{definition}[Spectral norm]
The spectral norm of a matrix $H$ is defined as
\be
\norm{H}\defeq \max_{v\neq 0} \frac{\norm{Hv}}{\norm{v}}=\max_{\norm{v}=1} \norm{Hv},
\ee
where $\norm{v}$ is the standard Euclidean vector norm defined as $\norm{v}\defeq \sqrt{\sum_i |v_i|^2}$.
\end{definition}

The spectral norm, also known as the operator norm or induced Euclidean norm, is equal to the largest singular value of the matrix. For Hermitian matrices it is also equal to the magnitude of the largest eigenvalue. This norm arises in the  complexity of sparse Hamiltonian simulation algorithms, and in \thm{dense} as the spectral norm of $\abs(H)$, the matrix with entries $\abs(H)_{jk} = |H_{jk}|$.

\begin{definition}[Induced 1-norm]
The induced 1-norm of a matrix $H$ is defined as
\be
\norm{H}_1\defeq \max_{v\neq 0} \frac{\norm{Hv}_1}{\norm{v}_1}=\max_j \sum_i |H_{ij}|,
\ee
where $\norm{v}_1$ is the vector 1-norm defined as $\norm{v}_1\defeq {\sum_i |v_i|}$.
\end{definition}

The induced 1-norm is equal to the maximum absolute column sum of the matrix. As mentioned in \sec{intro}, \thm{dense} holds with $\norm{\abs(H)}$ replaced by $\norm{H}_1$. This does not, however, lead to a superior simulation method since $\norm{H}_1 \geq \norm{\abs(H)}$, as shown in \lem{norms} below.

\begin{definition}[Maximum column norm]
The maximum column norm of a matrix $H$ is defined as
\be
\mcn(H)\defeq \max_{j} \sqrt{\sum_i |H_{ij}|^2} = \max_{v\neq 0} \frac{\norm{Hv}}{\norm{v}_1} = \max_{j} \norm{He_j},
\ee
where $e_j$ is the $j^{\mathrm{th}}$ column of the identity matrix.
\end{definition}

The maximum column norm is the maximum Euclidean norm of the columns of $H$. This norm appears in the complexity of an algorithm for simulating Hamiltonians whose graphs are trees~\cite[Theorem~4]{Chi08} and in the related \prop{arb} in \sec{structured}.

\begin{definition}[Max norm]
The max norm of a matrix $H$ is defined as
\be
\max(H)\defeq \max_{i,j} |H_{ij}|. 
\ee
\end{definition}

The max norm is just the largest entry of $H$ in absolute value. It is a matrix norm, and is typically much smaller than the other norms mentioned.

The following lemma relates the various norms introduced above.

\begin{lemma}
\label{lem:norms}
For any Hermitian matrix $H \in \C^{N \times N}$, we have the following inequalities:
\be
\max(H) \leq \mcn(H) \leq \norm{H} \leq \norm{\abs(H)} \leq \norm{H}_1 \leq \sqrt{N} \mcn(H) \leq N \max(H) \label{eq:norms}.
\ee
Furthermore, each of these inequalities is the best possible.
\end{lemma}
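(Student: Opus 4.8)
The plan is to prove the chain one inequality at a time, from left to right, and then to exhibit two explicit matrices that together show every step can hold with equality. The first two steps are immediate from the definitions. For $\max(H)\leq\mcn(H)$, observe that any single entry satisfies $|H_{ij}|^2\leq\sum_i|H_{ij}|^2$, so the largest entry is dominated by some column norm. For $\mcn(H)\leq\norm{H}$, I would use the identity $\mcn(H)=\max_j\norm{He_j}$ from the definition together with $\norm{He_j}\leq\norm{H}\,\norm{e_j}=\norm{H}$, which is just the operator-norm bound applied to the unit vectors $e_j$.

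The heart of the argument is $\norm{H}\leq\norm{\abs(H)}$. I would write the spectral norm variationally as $\norm{H}=\max_{\norm{u}=\norm{v}=1}|u^*Hv|$ and apply the triangle inequality entrywise:
\[
|u^*Hv|=\Bigl|\sum_{i,j}\bar u_i\, H_{ij}\, v_j\Bigr|\leq\sum_{i,j}|u_i|\,\abs(H)_{ij}\,|v_j|=|u|^*\abs(H)\,|v|,
\]
where $|u|,|v|$ denote the vectors of entrywise absolute values. Since $\norm{|u|}=\norm{u}$ and $\norm{|v|}=\norm{v}$, the right-hand side is at most $\norm{\abs(H)}$, and maximizing over $u,v$ gives the claim (this step does not even use Hermiticity). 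For $\norm{\abs(H)}\leq\norm{H}_1$, I would use Hermiticity: since $|H_{ij}|=|H_{ji}|$, the matrix $\abs(H)$ is symmetric and nonnegative, so its spectral norm equals its Perron eigenvalue, which is bounded by its maximum row sum; by symmetry that equals the maximum column sum, namely $\norm{H}_1$. Equivalently, one may apply the interpolation bound $\norm{A}\leq\sqrt{\norm{A}_1\,\norm{A}_\infty}$ to $A=\abs(H)$ and use $\norm{\abs(H)}_1=\norm{\abs(H)}_\infty=\norm{H}_1$. The final two steps are elementary: $\norm{H}_1\leq\sqrt{N}\,\mcn(H)$ follows by applying Cauchy--Schwarz to each column sum, $\sum_i|H_{ij}|\leq\sqrt{N}\,(\sum_i|H_{ij}|^2)^{1/2}$, and $\sqrt{N}\,\mcn(H)\leq N\max(H)$ follows by bounding every entry of a column by $\max(H)$.

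For the tightness claims, I would exhibit two matrices. The $N\times N$ identity makes the first four inequalities equalities, since the five norms $\max(H)$, $\mcn(H)$, $\norm{H}$, $\norm{\abs(H)}$, and $\norm{H}_1$ all equal $1$ for it. The $N\times N$ all-ones matrix $J$ makes the last four equalities: it is nonnegative (so $\norm{J}=\norm{\abs(J)}$) and rank one with top eigenvalue $N$, and one computes $\max(J)=1$, $\mcn(J)=\sqrt{N}$, and $\norm{J}=\norm{\abs(J)}=\norm{J}_1=N$, whence $\norm{J}_1=\sqrt{N}\,\mcn(J)=N\max(J)=N$. Together these two examples show each of the six inequalities is attained, so none of the constants can be improved. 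The only genuinely delicate points are the variational argument for $\norm{H}\leq\norm{\abs(H)}$ and the appeal to symmetry of $\abs(H)$ in $\norm{\abs(H)}\leq\norm{H}_1$; the remaining steps are routine.
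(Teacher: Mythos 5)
Your proof is correct and follows essentially the same route as the paper's: the same left-to-right chain of elementary arguments (entrywise triangle inequality for $\norm{H}\le\norm{\abs(H)}$, symmetry of $\abs(H)$ together with a spectral-radius-versus-induced-norm bound for $\norm{\abs(H)}\le\norm{H}_1$, Cauchy--Schwarz for the $\sqrt{N}$ step), and the same two extremal examples (the identity and the all-ones matrix). The only cosmetic differences are that you use the bilinear characterization $\max_{\norm{u}=\norm{v}=1}|u^*Hv|$ where the paper maximizes $\norm{Hv}$ over a single unit vector, and you invoke the Perron--Frobenius row-sum bound where the paper argues directly with the norm-achieving eigenvector of $\abs(H)$.
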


\begin{proof}
The first inequality follows from the fact that the maximum element in any column cannot be greater than the Euclidean norm of that column. We have
\be
\max(H)=\max_j\Big(\max_i |H_{ij}| \Big)\leq\max_{j}\sqrt{\sum_i |H_{ij}|^2}=\mcn(H).
\ee

The next inequality follows from the observation that $\mcn(H)$ is defined by a maximum over the standard basis vectors $e_j$, whereas $\norm{H}$ is defined by a maximum over all vectors with norm $1$, which contains the set of all $e_j$. Thus  
\be
\mcn(H)=\max_j\norm{He_j}\leq\max_{\norm{v}=1} \norm{Hv}=\norm{H}.
\ee

Using the triangle inequality with $\norm{H}=\max_{\norm{v}=1}(\sum_i|\sum_j H_{ij}v_j|^2)^{\frac{1}{2}}$, we get 
\be
\norm{H}\leq \max_{\norm{v}=1}\bigg(\sum_i\bigg|\sum_j |H_{ij}||v_j|\bigg|^2\bigg)^{\frac{1}{2}}=
\max_{\substack{\norm{v}=1\\ v_j\geq 0}}\bigg(\sum_i\bigg|\sum_j \abs(H)_{ij}v_j\bigg|^2\bigg)^{\frac{1}{2}}.
\ee
Now by maximizing over all $v$ with $\norm{v}=1$ instead of only those with $v_j\ge 0$, we get
\be
\max_{\substack{\norm{v}=1\\ v_j\geq 0}}\bigg(\sum_i\bigg|\sum_j \abs(H)_{ij}v_j\bigg|^2\bigg)^{\frac{1}{2}}\leq
\max_{\norm{v}=1}\bigg(\sum_i\bigg|\sum_j \abs(H)_{ij}v_j\bigg|^2\bigg)^{\frac{1}{2}}=
\norm{\abs(H)}.
\ee
The last inequality is actually an equality due to the Perron--Frobenius theorem.

Since $\abs(H)$ is a symmetric matrix, there is an eigenvector $z$ with eigenvalue equal in magnitude to $\norm{\abs(H)}$. Clearly this eigenvector satisfies $\norm{\abs(H)z}_1 = \norm{\abs(H)}\norm{z}_1$. Using this and maximizing over all nonzero vectors, we have
\be
\norm{\abs(H)}=\frac{\norm{\abs(H)}\norm{z}_1}{\norm{z}_1}=\frac{\norm{\abs(H)z}_1}{\norm{z}_1}\leq \max_{v\neq 0} \frac{\norm{\abs(H)v}_1}{\norm{v}_1}=\norm{\abs(H)}_1.
\ee
The inequality now follows from the fact that $\norm{H}_1=\norm{\abs(H)}_1$, since 
\be
\norm{\abs(H)}_1=\max_j \sum_i |\abs(H)_{ij}|=\max_j \sum_i |H_{ij}|=\norm{H}_1.
\ee

For the next inequality, we use the fact that $\norm{v}_1\leq\sqrt{N}\norm{v}$ for all vectors $v$. This can be proved using the Cauchy--Schwarz inequality, $|\<u,v\>| \leq \norm{u} \norm{v}$, by taking $u_i=v_i/|v_i|$.
Let $j_{\max}$ be the index $j$ that maximizes $\sum_i |H_{ij}|$. Thus $\norm{H}_1=\sum_i |H_{ij_{\max}}|=\norm{He_{j_{\max}}}_1$. Using these two inequalities, it follows that 
\be\label{eq:ineq1}
\norm{H}_1=\norm{He_{j_{\max}}}_1\leq \sqrt{N}\norm{He_{j_{\max}}} \leq \sqrt{N} \max_{j} \norm{He_j} = \sqrt{N} \mcn(H).
\ee

The last inequality is proved using the fact that for any $j$, $H_{ij} \leq \max_i H_{ij}$; thus
\be\label{eq:ineq2}
\mcn(H)=\max_{j}\sqrt{\sum_i |H_{ij}|^2}\leq \max_{j}\sqrt{N\max_i |H_{ij}|^2} = \sqrt{N} \max_{ij} |H_{ij}| = \sqrt{N} \max(H). 
\ee

For each of these inequalities, there is a matrix that achieves equality. The first four inequalities are saturated when $H$ is the identity matrix since the relevant norms are all equal to $1$. The last two inequalities are satisfied with equality when $H$ is the all-ones matrix (i.e., for all $i,j$, $H_{ij}=1$), since then $\norm{H}_1 = N$, $\mcn(H)=\sqrt{N}$, and $\max(H)=1$.
\end{proof}

Since \thm{dense} involves $\norm{\abs(H)}$, we would like to relate $\norm{\abs(H)}$ and $\norm{H}$. \lem{norms} gives $\norm{\abs(H)} \le \sqrt{N}\norm{H}$, which is also the best possible inequality between the two norms. For example, when $N$ is a power of $2$, the matrix $H=R^{\otimes \log N}$ achieves equality, where $R \defeq \left(\begin{smallmatrix}1 & 1 \\ 1 & -1\end{smallmatrix}\right)/\sqrt2$ is the Hadamard matrix. It has $\norm{H}=1$, but $\norm{\abs(H)}=\sqrt{N}$. This shows that \thm{dense} might not be as powerful as we would like, since for some Hamiltonians, the simulation method of \thm{dense} may be infeasible even when $\norm{H}$ is small.

Although the above inequalities cannot be tightened in general, there can of course be stronger relationships among the various norms for special classes of Hamiltonians.  For example, if $H$ is sparse, observe that the norms mentioned above can differ at most by a factor of $\poly(\log N)$.  Specifically, if $H$ is $k$-sparse (i.e., it has at most $k$ nonzero entries per row), then
\be
\max(H) \leq \mcn(H) \leq \norm{H} \leq \norm{\abs(H)} \leq \norm{H}_1 \leq \sqrt{k} \mcn(H) \leq k \max(H). 
\label{eq:sparse}
\ee

The first four inequalities are from \lem{norms}. The inequality $\norm{H}_1 \leq \sqrt{k} \mcn(H)$ follows from \eq{ineq1} using the fact that $\norm{v}_1\leq\sqrt{k}\norm{v}$ when $v$ has at most $k$ non-zero entries (this can be proved using the Cauchy--Schwarz inequality as before). The last inequality follows from \eq{ineq2} and the inequality $\sum_i |H_{ij}|^2 \leq k \max_i |H_{ij}|^2$, which holds when $H$ is $k$-sparse.  Thus the choice of norm for sparse matrices is quite flexible, since all the above-mentioned norms are equivalent up to polynomial factors.

In \sec{structured}, we discuss some more examples in which \lem{norms} can be strengthened, with emphasis on the implications for simulations.

\section{A no--fast-forwarding theorem for dense Hamiltonians}
\label{sec:norm}

The no--fast-forwarding theorem (\thm{fastfwd} above) establishes a lower bound for the simulation of sparse Hamiltonians. Although we stated the theorem with $\norm{H}=1$, any of the norms in \lem{norms} could have been used, since the Hamiltonian used in the proof of the no--fast-forwarding theorem is $2$-sparse, and by \eq{sparse} the norms differ at most by a factor of $2$.  In particular, the theorem could be restated with $\max(H)\leq 1$ or $\norm{H}_1 \leq 2$.

Since the choice of norm is unclear, it is conceivable that there are Hamiltonian simulation algorithms that run in time $\O(\max(Ht))$ or $\O(\mcn(Ht))$. To distinguish between the norms, we require a dense Hamiltonian. The aim of this section is use the proof techniques of \thm{fastfwd} to establish a similar theorem for dense Hamiltonians. In particular, we show that there does not exist an algorithm for simulating dense Hamiltonians in time $\O(\max(Ht))$ or $\O(\mcn(Ht))$. However, this does not appear to rule out $\poly(\norm{Ht})$ simulations, which we rule out in the next section.  Although \thm{densehard} in the next section is stronger than \thm{densefastfwd} below, we briefly present this straightforward generalization of the no--fast-forwarding theorem to show the extent of that approach as applied to the non-sparse case.

As in \thm{fastfwd}, we consider a black-box formulation of the problem of simulating dense Hamiltonians. There is a black box that can be queried with a row index $j$, which outputs the entire $j^{\mathrm{th}}$ row. Since the Hamiltonian is dense, this output can be exponentially large. This is not a problem, however, since our goal is to find a lower bound on query complexity, not time complexity. Even though each query takes exponential space, it counts as only one query. The black box used here is more powerful than the one in \thm{fastfwd}, so the lower bound proved below also carries over to the black box used in \thm{fastfwd}.

In terms of this black-box model, we have the following:

\begin{theorem}\label{thm:densefastfwd}
For any positive integer $N$, there exists a non-sparse Hamiltonian $H$ such that simulating the evolution of $H$ for time $t = \pi N/2$ within precision $1/4$ requires at least $N/4$ queries to $H$. This Hamiltonian has $\norm{H}=1$, $\mcn(H)=\Theta(1/\sqrt{N})$, and $\max(H)=\Theta(1/N)$. 
\end{theorem}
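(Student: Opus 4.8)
The plan is to take the sparse Hamiltonian $G$ underlying the proof of \thm{fastfwd}---which is $2$-sparse, has $\norm{G}=1$, acts on $\Theta(N)$ dimensions, and whose evolution for time $t=\pi N/2$ solves a problem (parity) requiring at least $N/4$ queries---and to \emph{dilute} it into a dense Hamiltonian that has the same spectral norm and literally the same dynamics, but whose entries are spread out. Concretely, I would adjoin an $N$-dimensional ancilla register and set $H \defeq G \otimes |\psi\rangle\langle\psi|$, where $|\psi\rangle \defeq \frac{1}{\sqrt N}\sum_{c=1}^{N}|c\rangle$ is the uniform superposition. Every entry of $H$ is then $H_{(a,c),(b,d)} = G_{ab}/N$, so each row has $\Theta(N)$ nonzero entries and $H$ is non-sparse.

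Because $|\psi\rangle\langle\psi|$ is a rank-one projector, the evolution factorizes as $e^{-\ii H t} = (e^{-\ii G t})\otimes|\psi\rangle\langle\psi| + I\otimes(I-|\psi\rangle\langle\psi|)$, so on any input of the form $|\phi\rangle\otimes|\psi\rangle$ the first register evolves exactly as $e^{-\ii G t}|\phi\rangle$. Taking $|\phi\rangle$ to be the initial state used in \thm{fastfwd} and tracing out the ancilla, a simulation of $e^{-\ii H t}$ to precision $1/4$ yields a simulation of $e^{-\ii G t}$ to precision $1/4$, since the trace distance cannot increase under the partial trace. The key point for the query count is that a single (dense) query to the $(a,c)$-th row of $H$ returns the values $G_{ab}/N$, which depend only on the $a$-th row of $G$ and not on the ancilla index $c$; hence one query to $H$ is answered using $\O(1)$ queries to $G$. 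Carrying out the parity reduction of \thm{fastfwd} directly with $H$ in place of $G$ (so the ``one row $\leftrightarrow$ one input bit'' correspondence is preserved) then forces any such simulation to make at least $N/4$ queries to $H$.

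It remains to verify the three norms. Since $\norm{G}=1$ and $\norm{|\psi\rangle\langle\psi|}=1$, we get $\norm{H}=\norm{G}\,\norm{|\psi\rangle\langle\psi|}=1$. Every entry equals $G_{ab}/N$ with $G_{ab}=\Theta(1)$, so $\max(H)=\Theta(1/N)$. Each column has exactly $2N$ nonzero entries---the two from the sparsity of $G$ times the $N$ ancilla indices---each of magnitude $\Theta(1/N)$, giving $\mcn(H)=\sqrt{2N}\cdot\Theta(1/N)=\Theta(1/\sqrt N)$. In particular $\max(Ht)=\Theta(1)$ and $\mcn(Ht)=\Theta(\sqrt N)$ are both $\o(N)$, so the $N/4$ lower bound rules out any simulation running in time $\O(\max(Ht))$ or $\O(\mcn(Ht))$, whereas $\norm{Ht}=\Theta(N)$ remains consistent with the bound---explaining why a separate argument is needed in \sec{stronger} to exclude $\poly(\norm{Ht})$ simulations.

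The main thing to get right is the choice of dilution: it must preserve both the spectral norm and the \emph{exact} dynamics (so that the $N/4$ bound transfers with no loss in the constant) while simultaneously forcing $\max(H)$ and $\mcn(H)$ far below $\norm{H}$. The tensor-with-a-projector construction accomplishes both at once, so the only delicate bookkeeping is the query reduction---confirming that each exponentially large row of $H$ is recoverable from $\O(1)$ queries to $G$, so that no blow-up in the query count occurs and the constant $N/4$ is retained.
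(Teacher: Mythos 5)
Your construction is identical to the paper's: since $|\psi\rangle\langle\psi| = J/N$ (with $J$ the all-ones matrix), your $H = G \otimes |\psi\rangle\langle\psi|$ is exactly the paper's $H = H_2 \otimes J/N$, and your analysis (invariance of the uniform-superposition subspace, reduction of each row query of $H$ to a constant number of queries to the underlying parity string, and the same norm computations) mirrors the paper's proof step for step. The proposal is correct; your explicit factorization $e^{-\ii Ht} = e^{-\ii Gt}\otimes|\psi\rangle\langle\psi| + I\otimes(I-|\psi\rangle\langle\psi|)$ is a slightly cleaner way of stating the paper's invariant-subspace observation.
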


\begin{proof}
The main idea, as in the proof of \thm{fastfwd}~\cite{BACS05}, is to construct a Hamiltonian whose simulation for time $t = \pi N/2$ determines the parity of $N$ bits. Since we know that computing the parity of $N$ bits requires at least $N/2$ queries~\cites{BBCMW01,FGGS98}, this Hamiltonian cannot be simulated with $\o(N)$ queries. Moreover, we want this Hamiltonian to be non-sparse.

We start with a simple Hamiltonian $H_1$ whose graph is just a line with $N+1$ vertices. Consider the Hamiltonian acting on vectors $|i\>$ with $i\in \{0,\ldots,N\}$. The nonzero matrix entries of $H_1$ are $\<i\left|H_1\right|i+1\>=\<i+1\left|H_1\right|i\>=\sqrt{(N-i)(i+1)}/N$
for $i \in \{0,1,\ldots,N-1\}$. This Hamiltonian has $\norm{H_1}=1$, and simulating $H_1$ for $t = \pi N/2$ starting with the state $|0\>$ gives the state $|N\>$ (i.e., $e^{-iH_1t}|0\>=|N\>$).

Now, as in Ref.~\cite{BACS05}, consider a Hamiltonian $H_2$ generated from an $N$-bit string $S_0 S_1 \ldots S_{N-1}$. $H$ acts on vertices $|i,j\>$, with $i\in \{0,\ldots,N\}$ and $j\in \{0,1\}$. The nonzero matrix entries of this Hamiltonian are  
\be
\<i,j\left|H_2\right|i+1,j\oplus S_i\>=\<i+1,j\oplus S_i\left|H_2\right|i,j\>=\sqrt{(N-i)(i+1)}/N
\ee
for all $i$ and $j$. By construction, $|0,0\>$ is connected to either $|i,0\>$ or $|i,1\>$ for any $i$; it is connected to $|i,j\>$ if and only if $j=S_0 \oplus S_1 \oplus \ldots \oplus S_{i-1}$. Thus $|0,0\>$ is connected to either $|N,0\>$ or $|N,1\>$, and determining which is the case determines the parity of $S$. The graph of this Hamiltonian consists of two disjoint lines, one of which contains $|0,0\>$ and either $|N,0\>$ or $|N,1\>$ depending on the parity of $S$. Just as for $H_1$, starting with the state $|0,0\>$ and simulating $H_2$ for time $t = \pi N/2$ will give either $|N,0\>$ or $|N,1\>$, which determines the parity of $S$. Note that since $H_2$ is a permutation of $H_1 \oplus H_1$, $\norm{H_2}=\norm{H_1}=1$.

Finally, we construct the dense Hamiltonian $H$ that has the properties stated in the theorem. As before, $H$ is generated from an $N$-bit string $S_0 S_1 \ldots S_{N-1}$. $H$ acts on vertices $|i,j,k\>$, with $i\in \{0,\ldots,N\}$, $j\in \{0,1\}$, and $k \in \{0,N-1\}$. The nonzero entries of $H$ are given by
\be
\<i,j,k\left|H\right|i+1,j\oplus S_i,k' \>=\<i+1,j\oplus S_i,k'\left|H\right|i,j,k\>=\sqrt{(N-i)(i+1)}/N^2
\ee
for all $i$, $j$, $k$, and $k'$. The graph of $H$ is similar to that of $H_2$, except that for each vertex in $H_2$, there are now $N$ copies of it in $H$. This Hamiltonian is dense because it has $\Theta(N^2)$ vertices and each vertex is connected to all $N$ copies of its neighboring vertices, which gives at least $N$ nonzero entries in each row. 

Now we simulate the Hamiltonian starting from the uniform superposition over the copies of the $|0,0\>$ state, i.e., from the state $\frac{1}{\sqrt{N}}\sum_k|0,0,k\>$. The subspace $\spn\{\sum_k|i,j,k\>\}$ of uniform superpositions over the third register is an invariant subspace of this Hamiltonian.  Since the initial state lies in this subspace, the quantum walk remains in this subspace.  In other words, the quantum walk on this dense graph starting from the chosen state reduces to the quantum walk on $H_2$ starting from the $|0,0\>$ state.

Now, just as before, the parity of $S$ can be determined by simulating $H$ for time $t = \pi N/2$. This gives the lower bound of $N/2$ queries.

To calculate the norms of this Hamiltonian, we observe that $H=H_2\otimes J/N$, where $J$ is the all-ones matrix of size $N \times N$. This gives $\norm{H}=\norm{H_2} \cdot \norm{J}/N = 1$. Direct computation shows that $\max(H)=\Theta(1/N)$ and $\mcn(H)=\Theta(1/\sqrt{N})$.
\end{proof}

This theorem rules out algorithms that make only $\O(\max(Ht))$ or $\O(\mcn(Ht))$ queries, since for this Hamiltonian  with $t = \pi N/2$ we have $\max(Ht)=\Theta(1)$ and $\mcn(Ht)=\Theta(\sqrt{N})$, both of which are disallowed by the lower bound of $\Omega(N)$.  However, this does not distinguish between $\norm{H}$ and $\norm{\abs(H)}$ (since $\abs(H)=H$), and in fact $\norm{H}_1 \sim 1$ as well.  In the next section we construct examples with $\norm{\abs{H}} \gg \norm{H}$ in order to show that a general simulation using $O(\norm{Ht})$ steps (or even $\poly(\norm{Ht}, \log N)$ steps)  is not possible.

\section{A stronger limitation for dense Hamiltonians}
\label{sec:stronger}

As discussed in \sec{intro}, there are dense Hamiltonian simulation algorithms that use $\O(\norm{\abs(Ht)})$ or $\O(\norm{Ht}_1)$ steps of a discrete-time quantum walk. However, in light of the no--fast-forwarding theorem for dense Hamiltonians, it might be reasonable to hope that dense Hamiltonians can be simulated in $\O(\norm{Ht})$ steps, or at least in $\poly(\norm{Ht}, \log N)$ steps. Indeed, if such simulations existed they could be applied to give new quantum algorithms for various problems~\cite{Chi08}.  Unfortunately, in this section, we show that such simulations are not possible in general.

The currently known dense Hamiltonian simulation algorithms rely on certain properties of the Hamiltonian that we call its \emph{structural properties}. By this we mean the location of nonzero entries in $H$, which correspond to the location of edges in the graph of the Hamiltonian, and the magnitudes of the edge weights. (The remaining information about the Hamiltonian is the phase of each matrix entry $H_{ij}$.)

Given the structural information, the currently known algorithms can simulate dense Hamiltonians using $\O(\norm{\abs(Ht)})$ or $\O(\norm{Ht}_1)$ calls to an oracle that gives the phase of the matrix entry at $H_{ij}$. We call this the \emph{matrix entry phase oracle}. This oracle provides the value of $H_{ij}/|H_{ij}|$ when queried with the input $(i,j)$.  (The oracle may return any complex number of unit modulus---say, $1$---when $H_{ij}=0$.)

We show that given this matrix entry phase oracle and complete structural information, there exist some Hamiltonians that cannot be simulated with $\poly(\norm{Ht}, \log N)$ queries (although they can be simulated with  $\O(\norm{\abs(Ht)})$ queries~\cite{Chi08}). The following theorem is our main result.

\begin{theorem}\label{thm:densehard}
No quantum algorithm can simulate a general Hamiltonian $H \in \C^{N \times N}$ for time $t$ with $\poly(\norm{Ht},\log N)$ queries to a matrix entry phase oracle, even when given complete structural information about the Hamiltonian. 
\end{theorem}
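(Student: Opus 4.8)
The plan is to prove the lower bound by an information-theoretic reduction from \textsc{Parity}, with all of the hidden information hidden in the \emph{phases} of the matrix entries. Concretely, I would build a family of Hamiltonians $\{H_S\}$ indexed by an $n$-bit string $S$ such that (i) the structural data $\abs(H_S)$ (the locations and magnitudes of nonzero entries) is the \emph{same} for every $S$, so the structural oracle is useless; (ii) each hidden bit $S_i$ is encoded in the phase of a single matrix entry, so learning one bit costs $O(1)$ queries to the matrix entry phase oracle; (iii) there is a fixed input state, a fixed time $t$, and a fixed two-outcome measurement such that the output distribution of $e^{-\ii H_S t}$ determines $\bigoplus_i S_i$ with bias bounded below by a constant. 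Given such a family, a simulation to precision $1/4$ would compute $\textsc{Parity}(S)$, so by the $\Omega(n)$ quantum query lower bound for parity~\cites{BBCMW01,FGGS98} any simulator must make $\Omega(n)$ phase queries. If in addition $\norm{H_S t}=\poly(\log N)$ while $n=N^{\Omega(1)}$, this contradicts any $\poly(\norm{Ht},\log N)$ upper bound, proving the theorem.

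The design must reconcile two competing demands. To keep $\norm{H_S t}=\poly(\log N)$ I need both a short evolution and a Hamiltonian of small spectral norm, and the small norm must come from phase cancellation of the kind that makes $\norm{\abs(H)}\gg\norm{H}$ (the maximal gap $\norm{\abs(H)}=\sqrt N\,\norm{H}$ of \lem{norms}, realized by the Hadamard example). The difficulty is that a single long path or cycle, as in \thm{densefastfwd}, forces $\norm{Ht}\sim n$; the walk there must physically traverse a distance proportional to the number of bits. To decouple the number of accessible bits from the traversal distance I would instead use a \emph{short-but-wide} layered graph: only $D=\poly(\log N)$ layers (so small depth and hence small evolution time, with spin-chain couplings of the form $\sqrt{(D-\ell)(\ell+1)}$ along the depth direction to give clean transport between the end layers), but width $w=N^{\Omega(1)}$, with consecutive layers joined by dense blocks of entries of common magnitude $1/\sqrt w$. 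Such blocks give $\norm{\abs(H_S)}\sim\sqrt w$ while phase patterns keep $\norm{H_S}=O(1)$, so $\norm{H_S t}=\poly(\log N)$ yet the walk passes through $\sim Dw^2\gg n$ phase-bearing edges in logarithmic depth, consistent with the honest cost $\norm{\abs(H_S t)}\sim\sqrt w$ of \thm{dense}.

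For the readout I would arrange the hidden bits as signs on the edges so that they act as magnetic fluxes: a path sees a bit only through the gauge-invariant product of phases around a cycle, and the graph is engineered so that the total flux controlling the amplitude delivered to a designated output basis state equals $\pi\cdot\textsc{Parity}(S)$. Then $\textsc{Parity}(S)=0$ and $\textsc{Parity}(S)=1$ produce transported amplitudes of opposite sign and are distinguished by the final measurement, completing the reduction.

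The main obstacle is to meet demands (ii) and (iii) \emph{uniformly over all strings $S$}, not just generically. Putting hidden signs on arbitrary block entries is fatal: the all-equal setting turns a block into a scaled all-ones matrix and inflates $\norm{H_S}$ to $\sim\sqrt w$, destroying the norm bound for that particular $S$. The cancellation that keeps $\norm{H_S}$ small must therefore be supplied by a \emph{fixed} Hadamard/Fourier-type backbone whose spectral norm is small by construction, while the hidden bits are allowed to perturb it only in a norm-controlled, gauge-nontrivial way---for instance, as signs on a \emph{matching} of $\Theta(N)$ pairwise-disjoint edges, so that flipping any subset changes $H_S$ by a block-diagonal perturbation of bounded norm and hence preserves $\norm{H_S}=O(1)$ for every $S$ simultaneously. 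The crux, and the part I expect to be hardest, is showing that these two requirements can coexist: strong uniform cancellation tends to suppress exactly the signal one wants to read out, so one must exhibit an explicit construction (most cleanly via an invariant subspace that collapses the wide walk to a low-dimensional effective walk) in which the spectral norm provably stays $\poly(\log N)$ for all hidden strings while the single flux encoding $\textsc{Parity}(S)$ survives and steers a constant fraction of the amplitude onto the output state.
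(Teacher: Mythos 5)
Your plan founders on a provable impossibility, not an engineering difficulty: demands (i)--(iii) are jointly unsatisfiable, so the hard family $\{H_S\}$ you posit does not exist. The reason is a degree argument. Truncating the Taylor series, $e^{-\ii H_S t}$ is approximated in spectral norm to error $\epsilon$ by $\sum_{k\le K}(-\ii t)^k H_S^k/k!$ once $K = O(\norm{H_S t} + \log(1/\epsilon))$. Under your demand (ii), every matrix entry of $H_S$ is an affine function of a single bit $S_i$, so each amplitude $\<\phi|e^{-\ii H_S t}|\psi_0\>$ is, up to error $\epsilon$, a polynomial of degree at most $K$ in $S_1,\dots,S_n$, and the outcome probability of your fixed measurement is a real multilinear polynomial $q(S)$ of degree at most $2K$ (using $S_i^2=1$). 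If $q$ determined $\bigoplus_i S_i$ with constant bias $\delta$, then $2q-1$ would approximate the character $\chi_{[n]}(S)=\prod_i S_i$ pointwise within $1-2\delta<1$, whence $\mathbb{E}_S[(2q(S)-1)\chi_{[n]}(S)]\ge 2\delta>0$, so $q$ must contain the monomial $\prod_i S_i$ and have degree at least $n$. With $\norm{H_S t}=\poly(\log N)$ and $n=N^{\Omega(1)}$ this forces $\poly(\log N)\ge N^{\Omega(1)}$, a contradiction---independent of any cleverness with fluxes, layered graphs, or invariant subspaces. This is also the precise reason behind your (correct) observation that the construction of \thm{densefastfwd} forces $\norm{Ht}\sim n$: a parity signal can only be carried by walks of length $\Omega(n)$.

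The fix---and this is what the paper does---is to keep your demands (i) and (ii) but weaken (iii): replace parity by a signal of \emph{degree one} in the hidden bits that still has a polynomially large quantum query lower bound, namely distinguishing $\sum_i s_i=+B$ from $-B$ with $B=\sqrt{M\log M}$, which costs $\Omega(M/B)=\Omega(\sqrt{M/\log M})$ queries even on average (\lem{advbound}, proved by the adversary method plus a random self-reduction). The hard instances are $\pm1$ symmetric circulants $H_s$ of size $N=2M+1$: the structure (complete graph, unit magnitudes) is identical for all $s$ and each entry's phase is one bit of $s$, the uniform superposition is a \emph{known} eigenvector with eigenvalue $\lambda_0=2\sum_i s_i$, so simulating for $\tau=\pi/(4\sqrt{M\log M})$ and running phase estimation reads off the sign of the sum; meanwhile Hoeffding's inequality, together with conditioning on the promise (Lemmas \ref{lem:smallnorm} and \ref{lem:promisedset}), shows $\norm{H_s\tau}=O(\log M)$ for almost all promise inputs, yielding the contradiction with a $\poly(\norm{Ht},\log N)$ algorithm. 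Note the resulting lower bound is $\Omega(\sqrt{N/\log N})$ rather than the $\Omega(N)$ you aimed for: by the degree barrier above, giving up parity is the price of keeping $\norm{Ht}$ polylogarithmic.
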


\begin{proof}
The proof of the theorem is divided into two parts. First we show that there exists a set of Hamiltonians of size $N \times N$ that is hard to simulate on average for a particular time $t$. Specifically, we show that simulating a Hamiltonian selected uniformly at random from this set for a chosen time has average-case query complexity $\Omega(\sqrt{N/\log N})$. Then we show that a Hamiltonian simulation algorithm that makes $\poly(\norm{Ht},\log N)$ queries would violate this lower bound. The lemmas used in the proof are proved in the appendix.

To show the lower bound, we need a black-box problem with an $\Omega(\sqrt{N/\log N})$ average-case lower bound, and a set of Hamiltonians whose simulation would solve this problem. We consider the problem of distinguishing strings $s \in \{-1,+1\}^M$ that have sum $-B$ or $+B$, given a black box for the entries of the string. When queried with an index $i \in \{1,2,\ldots,M\}$, the black box returns the value of $s_i \in \{-1,+1\}$, where $s=s_1 s_2 \ldots s_M$. The following lemma characterizes the query complexity of this problem.

\begin{restatable}{lemma}{advbound}
\label{lem:advbound}
Suppose we are given black-box access to a string $s \in \{-1,+1\}^M$, where $s$ is chosen uniformly at random from the set of strings with $\sum_i s_i \in \{-B,+B\}$.  Then determining $\sum_i s_i$ has average-case quantum query complexity $\Theta(M/B)$.
\end{restatable}

Thus, determining whether the sum is $-\sqrt{M \log M}$ or $+\sqrt{M \log M}$, with the promise that one of these is the case, requires $\Omega(\sqrt{M/\log M})$ quantum queries on average. For each string $s$, we construct a Hamiltonian $H_s$ whose simulation for a particular time allows us to distinguish the two possible cases assuming $s$ satisfies the promise. 

Let $H_s$ be a symmetric circulant matrix of size $N \times N$, where $N=2M+1$ is odd. (A circulant matrix is a matrix in which each row is rotated one element to the right relative to the preceding row.) In general, a circulant matrix is completely specified by its first row. However, since $H_s$ is a symmetric circulant matrix, it is completely specified by the first $M+1$ entries of the first row. Let the first entry of the first row be 0, and the next $M$ entries of the first row be $s_1,s_2,\ldots,s_M$. In other words, the first $M+1$ entries of the first row of $H_s$ are $0$ followed by the string $s$.  Then the remaining entries of the first row are $s_M,s_{M-1},\ldots,s_1$.

Given a black box for the entries of $s$, we can easily construct a black box for the entries of $H_s$. Indeed, one query to $H$ can be simulated with at most one query to the string $s$. Sometimes no query to $s$ is needed, since the diagonal entries of $H_s$ are always 0.

Since $H_s$ is a circulant matrix, it is diagonalized by the discrete Fourier transform. Its eigenvalues $\lambda_{0},\lambda_{1},\ldots,\lambda_{N-1}$ are
\be
\lambda_{k}=2\sum_{j=1}^{M}s_{j}\cos\left(\frac{2\pi jk}{N}\right), \quad  \textrm{and in particular,} \quad 
\lambda_{0}=2\sum_{j=1}^{M}s_{j}.
\ee
Thus the time evolution of $H_s$ can be used to learn whether $\sum_j s_j$ is $-\sqrt{M \log M}$ or $+\sqrt{M \log M}$. Since $\lambda_0=2\sum_j s_j$, the two cases can be distinguished by determining the sign of $\lambda_0$. Note that we know the eigenvector corresponding to $\lambda_0$: it is the first column of the discrete Fourier transform matrix, i.e., the uniform superposition over all computational basis states. 

Consider the eigenvalues and eigenvectors of the unitary matrix $e^{-\ii H\tau}$ that corresponds to evolving $H$ for time $\tau={\pi}/{4\sqrt{M\log M}}$. The eigenvectors of this matrix are the same as those of $H$, and each eigenvalue $\lambda_k$ of $H$ corresponds to the eigenvalue $\exp\(-\ii\lambda_k\tau\)$ of $e^{-\ii H\tau}$. Thus the uniform superposition is an eigenvector of $e^{-\ii H\tau}$ with eigenvalue $\exp\(-\ii\lambda_0\tau\)=\exp\(-\ii\pi \sum_i s_i/2\sqrt{M\log M}\)$. Since $\sum_i s_i/\sqrt{M\log M}$ is either $\pm 1$, the two possible eigenvalues are $\pm\ii$. Because the eigenvector is known, the two possibilities can be easily distinguished by phase estimation on the unitary $e^{-\ii H\tau}$. Since the problem of distinguishing these two cases has an $\Omega(\sqrt{M/\log M})$ average-case lower bound by \lem{advbound}, we get an $\Omega(\sqrt{M/\log M})$ average-case lower bound for simulating such Hamiltonians for time $\tau={\pi}/{4\sqrt{M\log M}}$.

Now we want to show that a $\poly(\norm{Ht},\log N)$ Hamiltonian simulation algorithm violates this lower bound. To do this, we need to know the typical behavior of $\norm{H_s}$ when $s$ satisfies the promise. Let $\mathcal{S} \defeq \{-1,+1\}^M$ and let $\mathcal{P}$ be the subset of strings in $\mathcal{S}$ that satisfy the promise. As a first step, let us see the behavior of $\norm{H_s}$ for all strings $s\in \mathcal{S}$, not just those that satisfy the promise. 

\begin{restatable}{lemma}{smallnorm}
\label{lem:smallnorm}
Let $H_s \in \R^{N \times N}$ be a symmetric circulant matrix of size $N=2M+1$ with the first $M+1$ entries of the first row given by $0$ followed by a string $s \in \mathcal{S}$. If $s$ is chosen uniformly at random from $\mathcal{S}$, denoted $s\inuar \mathcal{S}$, then for any $d>0$, 
\be
\Pr_{s \inuar \mathcal{S}}\left(\norm{H_s}\geq 4d\sqrt{M\log M}\right)\leq\frac{4+o(1)}{M^{2d^2-1}}.
\ee
\end{restatable}

In fact, even stronger results of this kind are known~\cites{SZ54,Hal73}, but the above bound is easy to prove and sufficient for our purposes. Using \lem{smallnorm}, we wish to bound the spectral norm of $H_s$ when $s \inuar \mathcal{P}$. We can do so by first calculating the probability that a randomly selected string satisfies the promise. 

\begin{restatable}{lemma}{promisedset}
\label{lem:promisedset}
If $s \inuar \mathcal{S}$, then the probability that $s$ satisfies the promise is
\be
\Pr_{s \inuar \mathcal{S}}(s \in \mathcal{P})=\Theta(1/M).
\ee
\end{restatable}

Using Lemmas \ref{lem:smallnorm} and \ref{lem:promisedset}, we can upper bound the probability that $\norm{H_s}$ is large when $s$ is chosen uniformly at random from $\mathcal{P}$. If $X$ is the event that $\norm{H_s}\geq 4d\sqrt{M\log M}$ and $Y$ is the event that $s \in \mathcal{P}$, then $\Pr(X)$ is given by \lem{smallnorm} and $\Pr(Y)$ is given by \lem{promisedset}.   In these terms, we can compute an upper bound for $\Pr(X|Y)$ as follows: 
\be
\Pr_{s\inuar \mathcal{P}}\(\norm{H_s}\geq 4d\sqrt{M\log M}\)
=\Pr(X|Y)
=\frac{\Pr(X \cap Y)}{\Pr(Y)}
\leq \frac{\Pr(X)}{\Pr(Y)}
=O(M^{2-2d^2}).
\label{eq:largenormprob}
\ee

To achieve a contradiction, assume that for some constant $c>0$, there exists a Hamiltonian simulation algorithm using $O\((\norm{Ht}\log M)^c\)$ queries to simulate $H$ for time $t$. By \eq{largenormprob}, we know that $H_s$ almost always has spectral norm smaller than $4d\sqrt{M\log M}$ when $s \inuar \mathcal{P}$. Since $\norm{H_s} \le \norm{H_s}_1 = 2M$, we can compute the average-case query complexity of the claimed algorithm as follows:
\ba
\mathop {\mathbb E}_{s \inuar \mathcal{P}}\left(\norm{Ht}^c\right) 
&\leq \Pr_{s\inuar \mathcal{P}}\(\norm{H_s} < 4d\sqrt{M\log M}\) O\(\left( 4d\sqrt{M\log M} \frac{\pi}{4\sqrt{M\log M}}\log M\right)^c\) \nonumber\\
&\quad+  \Pr_{s\inuar \mathcal{P}}\(\norm{H_s}\geq 4d\sqrt{M\log M}\) O\(\left((2M)\frac{\pi}{4\sqrt{M\log M}}\log M\right)^c\)\\
&\leq O((d\log M)^c) + O\({M^{c/2-2d^2+2}}{(\log M)^{c/2}}\),
\ea
and by choosing $2d^2>c/2+2$, we have
\ba
\mathop {\mathbb E}_{s \inuar \mathcal{P}}\left(\norm{Ht}^c\right) 
&= O\((\log M)^c\).
\ea
Thus the average-case query complexity of the claimed algorithm is $O\((\log M)^c\)$, which violates the lower bound of  $\Omega(\sqrt{M/\log M})$. 
\end{proof}

The proof technique above can be extended to rule out algorithms with query complexity sub-exponential in $(\norm{Ht},\log N)$ as well, by changing the promised set (i.e., the value of $B$ used in \lem{advbound}) and choosing a larger value of $d$ in \lem{smallnorm}. Exponential functions of $(\norm{Ht},\log N)$ cannot be ruled out, of course, since any Hamiltonian can be simulated by making $O(N^2)$ queries, which is exponential in $\log N$. On the other hand, if we insist that the query complexity of an algorithm depends only on $\norm{Ht}$ (and not $\log N$), then the proof above can be modified to rule out algorithms whose time complexity is an arbitrary function of $\norm{Ht}$. For example, there exists no Hamiltonian simulation algorithm that makes $\exp(\exp(\norm{Ht}))$ queries. 

Finally, we emphasize that even though the above proof involves average-case complexity and distributions over inputs, \thm{densehard} is a statement about the worst-case complexity of simulating Hamiltonians.

\section{Simulation complexity for structured Hamiltonians}
\label{sec:structured}

As the previous section shows, we cannot hope for general Hamiltonian simulation algorithms that scale polynomially in the spectral norm of the Hamiltonian. Although we do know algorithms that scale like $\O(\norm{\abs(H)t})$, \lem{norms} tells us that $\norm{\abs(H)}$ could be exponentially larger than $\norm{H}$. However, we can achieve better scaling for special classes of Hamiltonians.  For example, we saw in \sec{measures} that much stronger bounds hold for sparse Hamiltonians.

We can also improve the inequalities of \lem{norms} for certain classes of non-sparse Hamiltonians.  For example, consider the class of Hamiltonians whose graphs are trees (where the graph of a matrix refers to the graph of its nonzero entries).  Such Hamiltonians can be efficiently simulated even when they are not sparse: in this case, \thm{dense} gives a simulation using $\O(\norm{Ht})$ steps of a discrete-time quantum walk, because when the graph of $H$ is a tree, $\norm{\abs(H)}=\norm{H}$.

\begin{proposition} \label{prop:tree}
If the graph of a Hermitian matrix $H$ is a tree, then there exists a unitary matrix $U$ such that $UHU^{\dag}=\abs(H)$.  In particular, $\norm{\abs(H)} = \norm{H}$. 
\end{proposition}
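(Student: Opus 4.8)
The plan is to realize $U$ as a \emph{diagonal} unitary $D = \mathrm{diag}(e^{\ii\theta_1},\dots,e^{\ii\theta_N})$, since conjugating by such a matrix rescales each entry by a phase without touching its magnitude. Writing each off-diagonal entry in polar form as $H_{jk} = |H_{jk}|\,e^{\ii\phi_{jk}}$ (with $\phi_{kj}=-\phi_{jk}$ forced by $H=H^{\dag}$), one computes $(DHD^{\dag})_{jk} = e^{\ii(\theta_j-\theta_k)}H_{jk}$. Hence $DHD^{\dag} = \abs(H)$ exactly when the phases satisfy $\theta_j - \theta_k \equiv -\phi_{jk} \pmod{2\pi}$ for every edge $(j,k)$ of the graph of $H$. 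The diagonal causes no trouble: a tree has no self-loops, so $H_{jj}=0=\abs(H)_{jj}$, and conjugation by a diagonal matrix fixes the diagonal.

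The key step is to solve this system of phase constraints, and this is exactly where the tree hypothesis enters. I would pick a root vertex $r$, set $\theta_r = 0$, and define the remaining phases by propagating outward: for any vertex $j$, traverse the unique path $r=v_0, v_1, \dots, v_m = j$ and set $\theta_{v_\ell} = \theta_{v_{\ell-1}} + \phi_{v_{\ell-1} v_\ell}$ step by step, so that $\theta_j = \sum_{\ell=1}^m \phi_{v_{\ell-1}v_\ell}$. Because the tree is connected and acyclic, this path is unique, so $\theta_j$ is well-defined, and each edge constraint is satisfied by construction (the constraint for the reversed edge following automatically from $\phi_{kj}=-\phi_{jk}$). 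Taking $U=D$ then gives $UHU^{\dag} = \abs(H)$. The norm statement is then immediate, since conjugation by a unitary preserves singular values: $\norm{\abs(H)} = \norm{UHU^{\dag}} = \norm{H}$.

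I expect the only real obstacle to be the consistency of the phase assignment. For a general graph the constraints $\theta_j - \theta_k \equiv -\phi_{jk}$ are solvable only if the phases accumulated around every cycle cancel (an exactness condition, or equivalently the triviality of a discrete $1$-cocycle), which a generic Hermitian matrix will violate. The entire content of the tree hypothesis is precisely that there are no cycles, so no such obstruction arises and the root-and-propagate construction always succeeds. The same argument works verbatim for a forest, treating each connected component independently with its own root.
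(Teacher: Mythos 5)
Your proof is correct and is essentially the paper's proof: both construct a diagonal unitary by fixing a root and propagating phases along the unique tree paths, with your additive phase angles $\theta_j$ being just the multiplicative form $U_{jj}=\alpha_{i_0 i_1}\cdots\alpha_{i_p j}$ used in the paper written as $e^{\ii\theta_j}$. Your explicit remarks on the cocycle obstruction for general graphs and on self-loops are nice additions but do not change the argument.
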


\begin{proof}
The matrix $U$ is diagonal.  To define $U_{ii}$, we arbitrarily fix some vertex as the root and consider the unique path from the root to vertex $i$.  Let the path contain the vertices $i_0, i_1, \ldots , i_{p-1}, i_p, i$, where $i_0$ is the root and $i_p$ is the parent of $i$.  For each nonzero entry of $H$, define $\alpha_{ij}\defeq H_{ij}/|H_{ij}|$.  Then let $U_{ii}\defeq 1$ if $i$ is the root and $U_{ii} \defeq \alpha_{i_0 i_1}\alpha_{i_1 i_2} \cdots  \alpha_{i_{p-1} i_p} \alpha_{i_p i}$ otherwise.

Since $U$ is diagonal, $(UHU^\dag)_{ij} = U_{ii} H_{ij} U_{jj}^*$.  If $i$ and $j$ are not adjacent in the tree, then $(UHU^\dag)_{ij} = H_{ij} = 0$ as required.  Otherwise, suppose without loss of generality that $j$ is the parent of $i$.  Then $U_{ii}U_{jj}^* = |\alpha_{i_0 i_1}|^2 |\alpha_{i_1 i_2}|^2 \cdots  |\alpha_{i_{p-1} i_p}|^2 \alpha_{j i} = H_{ji}/|H_{ij}|$, so $(UHU^\dag)_{ij} = |H_{ij}|$ as claimed.
\end{proof}

Thus \thm{dense} gives a simulation using $\O(\norm{Ht})$ steps of a discrete-time quantum walk. However, there is another simulation method for such Hamiltonians that uses only $\mcn(Ht)^{1+\o(1)}$ steps~\cite[Theorem~4]{Chi08}. It seems from \lem{norms} that the former method might be inferior, but due to \prop{arb} below it is, in fact, superior to the $\mcn(Ht)^{1+\o(1)}$ simulation (except with respect to error scaling), since $\norm{Ht} \leq 2 \mcn(Ht)$ when the graph of the Hamiltonian $H$ is a tree. 

If $H$ can be expressed as the sum of a small number of Hamiltonians, each of whose graph is a forest, then $H$ can be efficiently simulated when $\norm{H}$ is small.   Recall that a graph is said to have arboricity $k$ if its adjacency matrix can be written as the sum of the adjacency matrices of $k$ forests, but not $k-1$ forests.

\begin{proposition}\label{prop:arb}
If the graph of a Hamiltonian $H$ has arboricity $k$, then $\norm{\abs(H)} \le 2k \mcn(H)$. Moreover, $\norm{\abs(H)} \le 2k \norm{H}$ and $\norm{H} \le 2k \mcn(H)$. 
\end{proposition}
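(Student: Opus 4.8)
The plan is to reduce all three inequalities to a single bound for forests and then invoke \lem{norms} to fill in the rest. Since the graph of $H$ has arboricity $k$, its edge set partitions into $k$ forests $F_1,\ldots,F_k$. Let $A_i$ be the matrix that agrees with $\abs(H)$ on the edges of $F_i$ and is zero elsewhere; then each $A_i$ is a nonnegative symmetric matrix whose graph is a forest, $\abs(H)=\sum_{i=1}^k A_i$, and (since zeroing out entries cannot increase a column's Euclidean norm) $\mcn(A_i)\le\mcn(\abs(H))=\mcn(H)$. Granting the key claim that $\norm{A_i}\le 2\mcn(A_i)$ for each forest matrix $A_i$, the triangle inequality gives $\norm{\abs(H)}\le\sum_i\norm{A_i}\le 2\sum_i\mcn(A_i)\le 2k\mcn(H)$, which is the first inequality. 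The remaining two then follow from \lem{norms}: using $\mcn(H)\le\norm{H}$ we get $\norm{\abs(H)}\le 2k\mcn(H)\le 2k\norm{H}$, and using $\norm{H}\le\norm{\abs(H)}$ we get $\norm{H}\le\norm{\abs(H)}\le 2k\mcn(H)$.

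So the whole proposition rests on the forest bound: if $A$ is a nonnegative symmetric matrix whose graph is a forest, then $\norm{A}\le 2\mcn(A)$. To prove this I would write the spectral norm as a quadratic form, $\norm{A}=\max_{\norm{v}=1}|v^\top A v|$ (valid since $A$ is real symmetric), and, since $A$ is nonnegative with zero diagonal, bound $|v^\top A v|\le 2\sum_{\{i,j\}\in E}A_{ij}|v_i||v_j|$, where the sum runs once over each edge. It therefore suffices to show $\sum_{\{i,j\}\in E}A_{ij}|v_i||v_j|\le\mcn(A)$ for every unit vector $v$.

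The main idea, and the step I expect to be the crux, is to exploit acyclicity by rooting each tree of the forest and orienting every edge from child to parent, so that each non-root vertex contributes exactly one edge (to its parent). Grouping the edge sum by the parent endpoint $p$ and applying Cauchy--Schwarz to the inner sum over the children of $p$ bounds it by $\big(\sum_{c}A_{cp}^2\big)^{1/2}\big(\sum_c v_c^2\big)^{1/2}\le\mcn(A)\,\big(\sum_c v_c^2\big)^{1/2}$, since the edge weights from $p$ to its children are part of column $p$. A second application of Cauchy--Schwarz over the parents $p$, together with the fact that each non-root vertex is a child of exactly one parent (so $\sum_p\sum_{c}v_c^2\le\norm{v}^2=1$), collapses the whole sum to at most $\mcn(A)$. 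Combined with the factor of $2$ from the quadratic form, this yields $\norm{A}\le 2\mcn(A)$. The delicate point is the choice to orient toward parents rather than children: assigning each edge to its \emph{unique} parent is what lets the inner Cauchy--Schwarz produce a genuine column norm rather than the much larger weighted degree, and it is exactly this that converts the naive degree bound $\norm{A}\le\norm{A}_1$ into the sharp $\mcn$ bound. The regular tree, for which $\norm{A}\to 2\sqrt{d-1}$ while $\mcn(A)=\sqrt{d}$, shows that the constant $2$ cannot be improved.
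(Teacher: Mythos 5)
Your proof is correct, and it reaches the factor $2k$ by a genuinely different route from the paper's. The paper argues combinatorially: it splits each of the $k$ forests into \emph{two} forests of stars (edges whose parent endpoint lies at even depth versus odd depth from the root), proves the exact equalities $\mcn(S) = \norm{S} = \norm{\abs(S)}$ for star forests by exhibiting the eigenvalues $\pm\norm{w}$ and eigenvectors $\norm{w}e_1 \pm w$ explicitly, and then combines the resulting $2k$ pieces with the triangle inequality; its factor $2$ comes from the tree-to-two-star-forests split. You instead keep the $k$ forests intact and prove the standalone bound $\norm{A} \le 2\mcn(A)$ for any nonnegative symmetric matrix $A$ whose graph is a forest, with the factor $2$ coming from the quadratic form $v^\top A v$ counting each edge twice. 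Your crux step is sound: grouping edges by their unique parent endpoint makes $\sum_c A_{cp}^2$ a piece of the squared norm of column $p$, so the inner Cauchy--Schwarz produces $\mcn(A)$, and since each vertex is a child of at most one parent, the outer Cauchy--Schwarz gives $\sum_p \sum_c v_c^2 \le \norm{v}^2 = 1$. What each approach buys: your forest lemma is a sharp, quotable statement in its own right (your $d$-regular tree example shows the constant $2$ cannot be improved, a fact the paper's route leaves invisible), and it requires no spectral computation; the paper's route yields exact equalities at the star level and meshes with the structural picture of \prop{tree} (diagonal unitary equivalence for trees). The endgame is identical in both proofs: the pieces have disjoint supports, so each has $\mcn$ at most $\mcn(H)$, and the two ``moreover'' inequalities follow from \lem{norms}.
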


\begin{proof}
We begin by considering the case of a star graph. A star graph is a tree on $n$ vertices with one vertex having degree $n-1$ and the others having degree $1$ (i.e., the complete bipartite graph $K_{1,n-1}$). We show that if $S$ is a Hamiltonian whose graph is a star, 
\be
\mcn(S) = \norm{S} = \norm{\abs(S)}.
\ee

By permuting the vertices, the first vertex can be chosen to be the one with maximum degree. Now the first column of the matrix $S$ completely determines the Hamiltonian. Let the first column be $w$. The matrix $S$ has first column $w$ and first row $w^{\dag}$. It is easy to see that $\mcn(S)=\norm{w}$. $S$ has exactly two nonzero eigenvalues, $\pm\norm{w}$, corresponding to the eigenvectors $\norm{w}e_1 \pm w$, where $e_1$ is the first column of the identity matrix. Since $\norm{S}$ is the maximum eigenvalue, $\norm{S}=\mcn(S)$. 

Moreover, since $\abs(S)$ is a Hamiltonian whose graph is a star, we have  $\norm{\abs(S)}=\mcn(\abs(S))$. For any matrix $H$, $\mcn(\abs(H))=\mcn(H)$, since the norms of the columns depend only on the magnitude of each entry. This proves the desired result, $\mcn(S) = \norm{S} = \norm{\abs(S)}$. These results also hold for a disjoint union of star graphs (a forest of stars), since the above norms all have the property that $\nu(A_1\oplus \ldots \oplus A_n)=\max(\nu(A_1), \ldots , \nu(A_n))$.

To show the result for graphs of arboricity $k$, we begin by showing how a rooted tree can be decomposed into the sum of two forests of stars. The first forest contains all the edges in which the parent vertex is at a even distance from the root. The second forest contains the rest of the edges. This decomposes a rooted tree into two forests of stars, and similarly decomposes a forest into two forests of stars. Since the Hamiltonian has arboricity $k$, it can be decomposed into $k$ forests, which can be decomposed into $2k$ forests of stars. 

Thus $H=\sum_{l=0}^{2k} S_l$, where each of the $S_l$ is a Hermitian matrix whose graph is a forest of stars. Moreover, the matrices $S_l$ have no overlapping edges, i.e., if $(S_l)_{ij} \neq 0$ for some $l$, then $(S_l)_{ij}=0$ for all other $l$. Therefore, for all $i,j,l$, $H_{ij} \geq (S_l)_{ij}$, which implies $\mcn(H)\geq \mcn(S_l)$ for all $l$. This gives
\be
\mcn(H) \geq \frac{1}{2k} \sum_l \mcn(S_l) = \frac{1}{2k} \sum_l \norm{S_l}.
\label{eq:normlower}
\ee
Using the triangle inequality, we find
\be
\norm{\abs(H)} \leq \sum_l \norm{\abs(S_l)} = \sum_l \norm{S_l}.
\label{eq:absnormupper}
\ee

Combining \eq{normlower} and \eq{absnormupper} gives the main result. Using \lem{norms} and the main result gives $\norm{\abs(H)} \le 2k \norm{H}$ and $\norm{H} \le 2k \mcn(H)$. 
\end{proof}

\section{Open questions}
\label{sec:open}

Although we have ruled out the possibility of a generic Hamiltonian simulation algorithm using only $\poly(\norm{Ht}, \log N)$ operations, we can nevertheless hope that some nontrivial classes of Hamiltonians can be simulated in $\poly(\norm{Ht}, \log N)$ steps even though $\norm{H} \ll \norm{\abs(H)}$.

One approach is to consider changing the basis in which the Hamiltonian is simulated.  Clearly, if unitary transformations $U$ and $U^{\dag}$ can be performed efficiently, then $H$ can be simulated efficiently if and only if $UHU^{\dag}$ can.  There must exist bases in which $UHU^{\dag}$ is sparse (such as the basis in which it is diagonal), which may lead to efficient simulations of $H$.  Some trivial classes of Hamiltonians can be simulated in this way, such as Hamiltonians that are tensor products of small factors.  For example, the Hamiltonian $R^{\otimes n}$, where $R \defeq \left(\begin{smallmatrix}1 & 1 \\ 1 & -1\end{smallmatrix}\right)/\sqrt2$ is the Hadamard matrix, has $\norm{R^{\otimes n}}=1$ and $\norm{R^{\otimes n}}_1=\norm{\abs(R^{\otimes n})}=2^{n/2}$, yet the evolution according to $R^{\otimes n}$ is easy to simulate.  A similar simulation for a case where the Hamiltonian is not a tensor product was used in Ref.~\cite{CSV07}.

An alternative method is to investigate ways of decomposing a Hamiltonian as a sum of Hamiltonians that can be efficiently simulated.  For example, we can simulate Hamiltonians whose graphs have polynomial arboricity by decomposing them into stars (although we saw in \prop{arb} that such Hamiltonians can already by simulated efficiently by the method of Ref.~\cite{Chi08} since they satisfy $\norm{\abs(H)}=\poly(\norm{H})$).  More generally, other graph decompositions could give rise to new efficient simulations.

Another interesting problem is to find classes of Hamiltonians that can be simulated in sublinear time. These correspond to quantum systems whose time evolution can be fast-forwarded. Some Hamiltonians may even be simulated in constant time, if $e^{-\ii H\tau}=I$ after some constant time $\tau$ (for example,  the case $R^{\otimes n}$ mentioned above has $\tau=2\pi$).

\section*{Acknowledgments}

We thank Aram Harrow for suggesting the use of Hoeffding's inequality to simply the proof of \lem{smallnorm}.
This work was supported by MITACS, NSERC, QuantumWorks, and the US ARO/DTO.

\appendix
\section*{Appendix: Proofs of lemmas}

In this appendix, we prove Lemmas \ref{lem:advbound}, \ref{lem:smallnorm}, and \ref{lem:promisedset}.

\advbound*

\begin{proof}
We show the lower bound by first showing the same lower bound for the worst-case problem using the quantum adversary method~\cite{Amb02} and then reducing the worst-case problem to the average-case problem. 

For the worst-case lower bound, we use the notation of Theorem 2 of Ref.~\cite{Amb02}. We require two sets of inputs $X$ and $Y$ that have different outputs. Let $X$ be the set of all strings for which $\sum_i s_i = -B$, and $Y$ be the set for which $\sum_i s_i = +B$. We define a relation between the sets as follows. Let an element $x \in X$ be related to an element of $y \in Y$ if and only if $y$ can be reached from $x$ by changing exactly $B/2$ $-1$s to $+1$s in the string $x$. Note that a string in $X$ has exactly $(M/2 + B/2)$ $-1$s and $(M/2 - B/2)$ $+1$s.

Using these sets $X$ and $Y$, and the relation defined above, it is easy to see that
\be 
m=m'=\binom{M/2 + B/2}{B} \quad \textrm{and} \quad l=l'=\binom{M/2 + B/2 -1}{B-1},
\ee
so
\be
  \frac{m}{l} = \frac{m'}{l'} = \frac{1}{2}\(\frac{M}{B}+1\).
\ee
The adversary method now provides a lower bound of $\Omega\left(\sqrt{{mm'}/{ll'}}\right) = \Omega(M/B)$ for the worst-case query complexity of this problem.

The worst-case query complexity can now be reduced to the average-case query complexity under the uniform distribution over all input strings satisfying the promise. To do this, we first apply a uniformly random permutation to the input string, and then with probability $\frac{1}{2}$ multiply all the entries by $-1$ (and leave them unchanged with probability $\frac{1}{2}$). The resulting distribution is now uniform over all input strings satisfying the promise. If the string is not multiplied by $-1$, then the output of the permuted string is the same as the input string. If all the entries are multiplied by $-1$, then the output of the modified string is the opposite of that of the original input.

The lower bound is tight due to a matching upper bound provided by the algorithm for approximate quantum counting~\cite{BHMT02}. To distinguish the two types of inputs, we can approximately count the number of $+1$s to accuracy $\epsilon = {B}/{2M}$, which requires $O(M/B)$ queries.
\end{proof}

\smallnorm*

\begin{proof} 
The eigenvalues of $H_s$ are $\lambda_{r}=2\sum_{j=1}^{M} s_j \cos\frac{2\pi jr}{N}$, where $r \in \{0,1,\ldots,N-1\}$.  We wish to bound the probability that $\lambda_r$ is large, so as to bound the probability of $\norm{H_s}=\max_r |\lambda_r|$ being large. This is achieved by applying Hoeffding's inequality~\cite[Theorem~2]{Hoe63}.

\begin{theorem}[Hoeffding's inequality]
If $X_1, X_2, \ldots, X_M$ are independent and $a_j \leq X_j \leq b_j$ for all $1\leq j\leq M$, then for any $t>0$, we have
\be
\Pr\left(X-\mathbb{E}\left(X\right) \geq Mt \right)\leq \exp\left(\frac{-2M^2t^2}{\sum_{j=1}^{M}(b_j-a_j)^2}\right)
\ee
where $X = \sum_{j=1}^{M} X_j$.
\end{theorem}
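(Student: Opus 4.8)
The plan is to prove the inequality by the exponential moment (Chernoff) method, reducing the tail bound for the sum to a product of single-variable moment generating function estimates via independence, establishing the relevant per-variable estimate (Hoeffding's lemma) by a convexity argument, and finally optimizing over a free parameter.

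First I would fix a parameter $\lambda > 0$ and apply Markov's inequality to the nonnegative random variable $e^{\lambda(X - \mathbb{E}X)}$, giving
\[
\Pr\left(X - \mathbb{E}X \ge Mt\right) \le e^{-\lambda Mt}\, \mathbb{E}\bigl[e^{\lambda(X - \mathbb{E}X)}\bigr].
\]
Writing $Y_j = X_j - \mathbb{E}X_j$, the hypothesis that the $X_j$ are independent makes the $Y_j$ independent as well, so the moment generating function of the sum factorizes: $\mathbb{E}[e^{\lambda \sum_j Y_j}] = \prod_{j=1}^M \mathbb{E}[e^{\lambda Y_j}]$. It therefore suffices to bound each factor separately, where $Y_j$ is centered and confined to the interval $[a_j - \mathbb{E}X_j,\, b_j - \mathbb{E}X_j]$ of length $b_j - a_j$.

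The hard part will be the per-variable estimate, Hoeffding's lemma: for a centered random variable $Y$ with $a' \le Y \le b'$, one has $\mathbb{E}[e^{\lambda Y}] \le \exp(\lambda^2 (b'-a')^2/8)$. I would prove this by convexity of $y \mapsto e^{\lambda y}$: on $[a',b']$ the exponential lies below its chord, so $e^{\lambda y} \le \frac{b'-y}{b'-a'}e^{\lambda a'} + \frac{y-a'}{b'-a'}e^{\lambda b'}$. Taking expectations and using $\mathbb{E}Y = 0$ annihilates the linear term, leaving a bound of the form $e^{\phi(u)}$ with $u = \lambda(b'-a')$ and $\phi(u) = -pu + \log(1 - p + p e^{u})$, where $p = -a'/(b'-a') \in [0,1]$. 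A direct computation gives $\phi(0) = \phi'(0) = 0$ and $\phi''(u) = q(1-q)$ with $q = pe^{u}/(1 - p + pe^{u}) \in (0,1)$, so $\phi''(u) \le 1/4$; Taylor's theorem with remainder then yields $\phi(u) \le u^2/8$, which is exactly the claimed factor bound. This curvature estimate $\phi'' \le 1/4$ is the only genuinely delicate point of the argument.

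Finally I would assemble the pieces. Combining the factorized Chernoff bound with the per-variable estimate applied to each $Y_j$ gives
\[
\Pr\left(X - \mathbb{E}X \ge Mt\right) \le \exp\Bigl(-\lambda Mt + \tfrac{\lambda^2}{8}\sum_{j=1}^M (b_j - a_j)^2\Bigr).
\]
The exponent is a quadratic in $\lambda$ minimized at $\lambda = 4Mt/\sum_{j=1}^M(b_j-a_j)^2$; substituting this value collapses the exponent to $-2M^2 t^2/\sum_{j=1}^M(b_j-a_j)^2$, which is precisely the stated inequality. Everything outside the curvature bound is routine: the Markov step, the factorization, and the one-variable quadratic optimization.
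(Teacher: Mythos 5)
Your proposal is correct, but note that the paper itself does not prove this statement: Hoeffding's inequality is quoted there as a known result, cited to \cite{Hoe63}, and used as a black box inside the proof of \lem{smallnorm}. So there is no internal argument to compare against; what you have supplied is the standard self-contained proof---Markov's inequality applied to $e^{\lambda(X-\mathbb{E}X)}$, factorization of the moment generating function by independence, Hoeffding's lemma via the chord bound for the convex exponential together with the curvature estimate, and a quadratic optimization in $\lambda$---which is essentially the argument of Hoeffding's original paper. The details all check out: each centered $Y_j$ lives in an interval of the same length $b_j-a_j$; your $\phi(u)=-pu+\log(1-p+pe^u)$ satisfies $\phi(0)=\phi'(0)=0$ and $\phi''(u)=q(1-q)\le 1/4$ with $q=pe^u/(1-p+pe^u)$, so Taylor's theorem gives $\phi(u)\le u^2/8$; and substituting $\lambda=4Mt/\sum_{j}(b_j-a_j)^2$ yields $-\lambda Mt+\tfrac{\lambda^2}{8}\sum_{j}(b_j-a_j)^2=-2M^2t^2/\sum_{j}(b_j-a_j)^2$, exactly the stated exponent. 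Two trivial boundary cases deserve a sentence in a full write-up: if $p\in\{0,1\}$ (i.e., $a_j'=0$ or $b_j'=0$, which for a centered variable forces $Y_j=0$ almost surely) then $\phi\equiv 0$ and the lemma is immediate; and if $\sum_j(b_j-a_j)^2=0$ then all the $X_j$ are constants and the event in question has probability zero, so the inequality is vacuous. Neither affects your argument, which is complete and correctly assembled.
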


If we take $X_j = 2 s_j\cos\frac{2\pi jr}{N}$, then $X = \lambda_r = 2\sum_{j=1}^{M} s_j\cos\frac{2\pi jr}{N}$, each $X_j$ is between $-2$ and $+2$, and $\mathbb{E}\left(X\right) = \sum_{j=1}^M \mathbb{E}\left(X_j\right) = 0$. By choosing $t = 4d\sqrt{{\log M}/M}$, we get
\be
\Pr\left(\lambda_r \geq 4d\sqrt{M \log M} \right) \leq \exp\left(\frac{-2M^2(16d^2 \log M/M)}{\sum_{j=1}^M 4^2}\right) = \frac{1}{M^{2d^2}}.
\ee
Since a similar inequality holds when $X_j$ is replaced by $-X_j$, we get
\be
\Pr\left({\left|\lambda_{r}\right|}\geq 4d\sqrt{M\log M}\right)\leq\frac{2}{M^{2d^2}}.
\ee
Finally, since $\norm{H_s}=\max_r |\lambda_r|$, a union bound gives
\be
\Pr\left(\norm{H_s}\geq 4d\sqrt{M\log M}\right)\leq\frac{2N}{M^{2d^2}},
\ee
which implies the desired result.
\end{proof}

\promisedset*

\begin{proof}
Of the $2^M$ strings of length $M$, those with sum $-\sqrt{M \log M}$ or $+\sqrt{M \log M}$ have either $\frac{1}{2}(M+\sqrt{M\log M})$ $+1$s or $\frac{1}{2}(M+\sqrt{M\log M})$ $-1$s. Thus the total number of such strings is
\be \label{eq:prom}
2\binom{M}{\frac{M+\sqrt{M\log M}}{2}}.
\ee
We can asymptotically approximate this expression using a well-known approximation for the binomial coefficients (see for example equations 4.5 and 4.10 of Ref.~\cite{Odl95}),
which states that
\be
\binom{n}{k} \sim \frac{2^n \exp\(-2(k-n/2)^2/n\)}{\sqrt{\pi n/2}}
\ee
provided $|k-n/2| = o(n^{2/3})$.
Applying this to \eq{prom}, we get
\be
2\binom{M}{\frac{M+\sqrt{M\log M}}{2}}=\Theta\(2^M\exp(-\log M/2)/\sqrt{M}\) = \Theta(2^M/M),
\ee
which proves the claim.
\end{proof}


\begin{bibdiv}
\begin{biblist}

\bib{AT03}{inproceedings}{
      author={Aharonov, D.},
      author={Ta-Shma, A.},
       title={Adiabatic quantum state generation and statistical zero
  knowledge},
        date={2003},
   booktitle={Proc. 35th ACM Symposium on Theory of Computing},
       pages={20\ndash 29},
      eprint={quant-ph/0301023},
}

\bib{Amb02}{article}{
      author={Ambainis, A.},
       title={Quantum lower bounds by quantum arguments},
        date={2002},
     journal={J. Comput. Syst. Sci.},
      volume={64},
      number={4},
       pages={750\ndash 767},
      eprint={quant-ph/0002066},
        note={Preliminary version in STOC 2000},
}

\bib{BBCMW01}{article}{
      author={Beals, R.},
      author={Buhrman, H.},
      author={Cleve, R.},
      author={Mosca, M.},
      author={de~Wolf, R.},
       title={Quantum lower bounds by polynomials},
        date={2001},
     journal={J. ACM},
      volume={48},
      number={4},
       pages={778\ndash 797},
      eprint={quant-ph/9802049},
        note={Preliminary version in FOCS 1998},
}

\bib{BACS05}{article}{
      author={Berry, D.~W.},
      author={Ahokas, G.},
      author={Cleve, R.},
      author={Sanders, B.~C.},
       title={Efficient quantum algorithms for simulating sparse
  {H}amiltonians},
        date={2007},
     journal={Commun. Math. Phys.},
      volume={270},
      number={2},
       pages={359\ndash 371},
      eprint={quant-ph/0508139},
}

\bib{BHMT02}{incollection}{
   author={Brassard, G.},
   author={H{\o}yer, P.},
   author={Mosca, M.},
   author={Tapp, A.},
   title={Quantum amplitude amplification and estimation},
   book={
      title={Quantum Computation and Information},
      series={Contemp. Math.},
      volume={305},
      publisher={AMS},
      date={2002},
   },
   pages={53--74},
	eprint={quant-ph/0005055},
}

\bib{Chi04}{thesis}{
      author={Childs, A.~M.},
       title={Quantum information processing in continuous time},
        type={Ph.D. thesis},
 institution={Massachusetts Institute of Technology},
        date={2004},
}

\bib{Chi08}{article}{
      author={Childs, A.~M.},
       title={On the relationship between continuous- and discrete-time quantum walk},
     journal={to appear in Commun. Math. Phys.},
      eprint={arXiv:0810.0312},
}

\bib{CCDFGS03}{inproceedings}{
      author={Childs, A.~M.},
      author={Cleve, R.},
      author={Deotto, E.},
      author={Farhi, E.},
      author={Gutmann, S.},
      author={Spielman, D.~A.},
       title={Exponential algorithmic speedup by quantum walk},
        date={2003},
   booktitle={Proc. 35th ACM Symposium on Theory of Computing},
       pages={59\ndash 68},
      eprint={quant-ph/0209131},
}

\bib{CSV07}{inproceedings}{
      author={Childs, A.~M.},
      author={Schulman, L.~J.},
      author={Vazirani, U.~V.},
       title={Quantum algorithms for hidden nonlinear structures},
        date={2007},
   booktitle={Proc. 48th IEEE Symposium on Foundations of Computer Science},
       pages={395\ndash 404},
      eprint={arXiv:0705.2784},
}

\bib{FGG07}{article}{
      author={Farhi, E.},
      author={Goldstone, J.},
      author={Gutmann, S.},
       title={A quantum algorithm for the {H}amiltonian {NAND} tree},
        date={2008},
     journal={Theory of Computing},
      volume={4},
      number={1},
       pages={169\ndash 190},
      eprint={quant-ph/0702144},
}

\bib{FGGS98}{article}{
      author={Farhi, E.},
      author={Goldstone, J.},
      author={Gutmann, S.},
      author={Sipser, M.},
       title={Limit on the speed of quantum computation in determining parity},
        date={1998},
     journal={Phys. Rev. Lett.},
      volume={81},
      number={24},
       pages={5442\ndash 5444},
      eprint={quant-ph/9802045},
}

\bib{FGGS00}{techreport}{
      author={Farhi, E.},
      author={Goldstone, J.},
      author={Gutmann, S.},
      author={Sipser, M.},
       title={Quantum computation by adiabatic evolution},
      eprint={quant-ph/0001106},
}

\bib{FG96}{article}{
      author={Farhi, E.},
      author={Gutmann, S.},
       title={Analog analogue of a digital quantum computation},
        date={1998},
     journal={Phys. Rev. A},
      volume={57},
       pages={2403\ndash 2406},
      eprint={quant-ph/9612026},
}

\bib{Fey82}{article}{
      author={Feynman, R.~P.},
       title={Simulating physics with computers},
        date={1982},
     journal={Int. J. Theor. Phys.},
      volume={21},
       pages={467\ndash 488},
}

\bib{Hal73}{article}{
      author={Halasz, G.},
       title={On the result of {S}alem and {Z}ygmund concerning random
  polynomials},
        date={1973},
     journal={Studia Sci. Math. Hung.},
      volume={8},
       pages={369\ndash 377},
}

\bib{Hoe63}{article}{
      author={Hoeffding, W.},
       title={Probability inequalities for sums of bounded random variables},
     journal={J. Amer. Statist. Assoc.},
      volume={58},
        date={1963},
       pages={13\ndash 30},
}

\bib{Kni95}{techreport}{
     author={Knill, E.},
      title={Approximation by quantum circuits},
     eprint={quant-ph/9508006},
institution={Los Alamos National Laboratory},
     number={LAUR-95-2225},
}

\bib{Llo96}{article}{
      author={Lloyd, S.},
       title={Universal quantum simulators},
        date={1996},
     journal={Science},
      volume={273},
       pages={1073\ndash 1078},
}

\bib{Odl95}{incollection}{
   author={Odlyzko, A.~M.},
   title={Asymptotic enumeration methods},
   book={
   title={Handbook of Combinatorics},
   volume={2},
   editor={R. L. Graham},
   editor={M. Groetschel},
   editor={L. Lovasz},
   publisher={MIT Press},
   date={1995}
   },
   pages={1063\ndash 1229},
}

\bib{SZ54}{article}{
      author={Salem, R.},
      author={Zygmund, A.},
       title={Some properties of trigonometric series whose terms have random
  signs},
        date={1954},
     journal={J. Acta. Math.},
      volume={91},
       pages={245\ndash 301},
}

\end{biblist}
\end{bibdiv}

\end{document}